\RequirePackage[l2tabu,orthodox]{nag}% nag when using outdated macros
\RequirePackage{amsmath,amssymb}
\documentclass[a4paper]{scrartcl}
\usepackage{fullpage}

%
%% Pictures
%
\usepackage{tikz}
\usetikzlibrary{decorations.pathreplacing,angles,quotes}
\usepackage{standalone}

\usepackage[utf8]{inputenc}
\usepackage[T1]{fontenc}
\usepackage{lmodern} 

\usepackage{microtype}
\usepackage[within=none,figurename=Fig.,labelsep=period]{caption}

\usepackage{amssymb}
\usepackage{amsmath}
\usepackage{bm}   
\usepackage{mathtools}

\allowdisplaybreaks[1]

\usepackage[inline]{enumitem}
\setlist[itemize]{label=$\circ$}
\setlist[description]{labelindent=\parindent}

\usepackage[amsmath,hyperref,thmmarks]{ntheorem}

\theoremseparator{.}
\newtheorem{theorem}{Theorem}
\newtheorem{lemma}[theorem]{Lemma}

\newtheorem{fact}[theorem]{Fact}

\newtheorem{corollary}[theorem]{Corollary}
\newtheorem{claim}[theorem]{Claim}

\theoremstyle{plain}
\theorembodyfont{\upshape}

\theoremstyle{nonumberplain}
\theoremheaderfont{\itshape}
\theorembodyfont{\upshape}
\theoremsymbol{\ensuremath{_\blacksquare}}
\newtheorem{proof}{Proof}

%Etc
%\DeclareDocumentCommand{\restrict}{O{}}{\mathord{\restriction}_{#1}}

%Macros
%Sets

%Complexity classes and reductions
\newcommand{\cc}[1]{\ensuremath{\mathrm{#1}}}

\newcommand{\W}{\cc{W[1]}}

\newcommand{\parared}{\leq^T_P}

%Homomorphisms, Embeddings, ...
\newcommand{\homs}{\ensuremath{\mathsf{Hom}}}
\newcommand{\embs}{\ensuremath{\mathsf{Emb}}}
\newcommand{\subs}{\ensuremath{\mathsf{Sub}}}
\newcommand{\strembs}{\ensuremath{\mathsf{StrEmb}}}
\newcommand{\indsubs}{\ensuremath{\mathsf{IndSub}}}
\newcommand{\auts}{\ensuremath{\mathsf{Aut}}}

%Others
\newcommand{\modqa}{\mathsf{Mod}}
\newcommand{\isof}{\mathsf{Iso}}
\newcommand{\epk}{\mathsf{E}^\Phi_k}

\newcommand{\dense}{dense}
\newcommand{\Symgp}{\ensuremath{\mathsf{Sym}}}

\title{Counting Induced Subgraphs: A Topological Approach to \#W[1]-hardness}

\usepackage{authblk}

\author[a]{Marc Roth}
\author[b]{Johannes Schmitt}

\affil[a]{%
  Saarland University and Cluster of Excellence (MMCI), Saarbrücken, Germany
}

\affil[b]{%
  ETH Z\"urich, Zürich, Switzerland
}

\begin{document}
 
\maketitle

\begin{abstract}
We investigate the problem $\#\indsubs(\Phi)$ of counting all induced subgraphs of size~$k$ in a graph~$G$ that satisfy a given property $\Phi$. This continues the work of Jerrum and Meeks who proved the problem to be $\#\W$-hard for some families of properties which include, among others, (dis)connectedness~[JCSS~15] and even- or oddness of the number of edges~[Combinatorica~17]. Using the recent framework of graph motif parameters due to Curticapean, Dell and Marx [STOC~17], we discover that for monotone properties~$\Phi$, the problem $\#\indsubs(\Phi)$ is hard for $\#\W$ if the reduced Euler characteristic of the associated simplicial (graph) complex of $\Phi$ is non-zero. This observation links $\#\indsubs(\Phi)$ to Karp's famous Evasiveness Conjecture, as every graph complex with non-vanishing reduced Euler characteristic is known to be evasive. Applying tools from the ``topological approach to evasiveness'' which was introduced in the seminal paper of Khan, Saks and Sturtevant [FOCS~83], we prove that $\#\indsubs(\Phi)$ is~$\#\W$-hard for every monotone property $\Phi$ that does not hold on the Hamilton cycle as well as for some monotone properties that hold on the Hamilton cycle such as being triangle-free or not $k$-edge-connected for $k > 2$. Moreover, we show that for those properties $\#\indsubs(\Phi)$ can not be solved in time $f(k)\cdot n^{o(k)}$ for any computable function $f$ unless the Exponential Time Hypothesis (ETH) fails. In the final part of the paper, we investigate non-monotone properties and prove that $\#\indsubs(\Phi)$ is $\#\W$-hard if $\Phi$ is any non-trivial modularity constraint on the number of edges with respect to some prime $q$ or if $\Phi$ enforces the presence of a fixed isolated subgraph.
\end{abstract}

\section{Introduction}\label{sec:intro}
In their work about the parameterized complexity of counting problems~\cite{flumgrohe_counting} Flum and Grohe introduced the parameterized analogue of the theory of computational counting as layed out by Valiant in his seminal paper about the complexity of computing the permanent~\cite{valiant}. Since then parameterized counting has evolved into a well-studied subfield of parameterized complexity theory. In particular, there has been remarkable progress in the classification of problems that require to count small structures in large graphs. It turned out that many families of such counting problems allow so-called dichotomy results, that is, every problem in the family is either fixed-parameter tractable or hard for the class $\#\W$ --- the counting equivalent of $\W$. One result of that kind is the dichotomy for counting homomorphisms \cite{homsdicho1,homsdicho2}. Here one is given a graph $H$ from a class of graphs $\mathcal{H}$ and an arbitrary graph $G$ and the task is to compute the number of homomorphisms from $H$ to $G$. When parameterized by $|H|$ this problem is fixed-parameter tractable if there exists a constant upper bound on the treewidth of graphs in $\mathcal{H}$ and $\#\W$-hard otherwise. Similar results have been shown for the problems of counting subgraph embeddings \cite{embsdicho}, induced subgraphs \cite{strembsdicho} and locally injective homomorphisms~\cite{rothmobius}. As results like Ladner's theorem (see e.g. \cite{classic_ladner,para_ladner}) rule out such dichotomies in the general case one might ask why all of the above problems indeed do allow such complexity classifications. The answer to that question was given very recently by Curticapean, Dell and Marx~\cite{hombasis2017} who proved that, in some sense, all of those problems are the same. To this end, they defined the problem of computing linear combinations of homomorphisms which they called \emph{graph motif parameters}. Here one is given a graph $G$ and a function $a$ of finite support that maps graphs to rational numbers and the task is to compute
\begin{equation}
\sum_H a(H) \cdot \#\homs(H,G) \,,
\end{equation}
where the sum is over all (unlabeled) simple graphs and $\#\homs(H,G)$ denotes the number of homomorphisms from $H$ to $G$. A result of Lov{\'a}sz (see e.g. Chapt.~5 in~\cite{lovaszbook}) implies that the number of subgraph embeddings $\#\embs(H,G)$ as well as the number of induced subgraphs $\#\indsubs(H,G)$ can be expressed as a linear combination of homomorphisms. In case of embeddings the result states that
\begin{equation}
\#\embs(H,G) = \sum_{\rho \geq \emptyset} \mu(\emptyset,\rho) \cdot \#\homs(H/\rho,G)\,,
\end{equation}
where the sum is over the partition lattice of the vertices of $H$, $\mu$ is the Möbius function over that lattice and $H/\rho$ is obtained from $H$ by identifying vertices along $\rho$. Now, intuitively, the main result of Curticapean, Dell and Marx states that computing a linear combination of homomorphisms is precisely as hard as computing the hardest term in the linear combination. Together with the dichotomy for counting homomorphisms this implies that every problem expressible as a linear combination of homomorphisms is either fixed-parameter tractable or $\#\W$-hard. 

The purpose of this work is a thorough investigation of the problem of counting induced subgraphs through the lense of the framework of graph motif parameters. Chen, Thurley and Weyer \cite{strembsdicho} proved that the problem $\#\indsubs(\mathcal{H})$ of, given a graph $H \in \mathcal{H}$ and an arbitrary graph $G$, computing $\#\indsubs(H,G)$ is fixed-parameter tractable when parameterized by $|H|$ if and only if $\mathcal{H}$ is finite and $\#\W$-hard otherwise. While this result resolves the parameterized complexity of problems such as computing the number of induced cycles of length $k$,\footnote{This problem can be equivalently expressed as $\#\indsubs(\mathcal{C})$, where $C$ is the class of all cycles.} it is not applicable to problems such as computing the number of connected induced subgraphs of size $k$. For this reason, Jerrum and Meeks \cite{indsubs_connected,indsubs_density,indsubs_treewidth,indsubs_eveness} introduced and studied the following problem: Let $\Phi$ be a (computable) graph property, then the problem $\#\indsubs(\Phi)$ asks, given a graph $G$ and a natural number $k$, to count all induced subgraphs of size $k$ in $G$ that satisfy $\Phi$. In other words, the goal is to compute
\begin{equation}
\sum_{H \in \Phi_k} \#\indsubs(H,G) \,,
\end{equation}  
where $\Phi_k$ is the set of all (unlabeled) graphs with $k$ vertices that satisfy $\Phi$. The generality of $\#\indsubs(\Phi)$ allows to count almost arbitrary substructures in graphs, subsuming lots of parameterized counting problems that have been studied before, and hence the problem deserves a thorough complexity analysis with respect to the property $\Phi$. Jerrum and Meeks proved it to be $\#\W$-hard for the property of connectivity~\cite{indsubs_connected}, for the property of having an even (or odd) number of edges~\cite{indsubs_eveness} as well as for some other properties (see Section~\ref{sec:related_work}). As noted in \cite{hombasis2017}, the theory of graph motif parameters immediately implies that for every property $\Phi$, the problem $\#\indsubs(\Phi)$ is either fixed-parameter tractable or $\#\W$-hard. However, for a concrete $\Phi$ it might be highly non-trivial to prove for which graphs $H$ the term $\#\homs(H,G)$ is contained with a non-zero coefficient in the equivalent expression as linear combination of homomorphisms. Unfortunately, this is precisely what needs to be done to find out whether $\#\indsubs(\Phi)$ is fixed-parameter tractable or $\#\W$-hard. In our investigation we will focus on the coefficient of $\#\homs(K_k,G)$, where $K_k$ is the complete graph on k vertices. We will see that for monotone properties, non-zeroness of this coefficient is sufficient for the property to be evasive.

\subsection{Results and techniques}
The framework of graph motif parameters~\cite{hombasis2017} implies that for every property $\Phi$ and natural number $k$, there exists a function $a$ from graphs to rationals such that for all graphs $G$ it holds that
\begin{equation}
\sum_{H \in \Phi_k} \#\indsubs(H,G) = \sum_H a(H) \cdot \#\homs(H,G) \,.
\end{equation}
Our most important observation is concerned with the coefficient of the complete graph.
\begin{theorem}[Intuitive version] \label{thm:clique_coef_int}
Let $\Phi$, $k$ and $a$ be as above. Then it holds that $a(K_k)= 0$ if and only if $\sum_{A \in \epk}  (-1)^{\#A} = 0$, where $\epk$ is the set of all edge-subsets $A$ of the labeled complete graph with $k$ vertices such that $\Phi$ holds for the graph induced by $A$. 
\end{theorem}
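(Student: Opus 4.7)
The plan is to compute the coefficient $a(K_k)$ explicitly by unfolding the standard two-step expansion of $\#\indsubs(H,G)$ into homomorphism counts, isolating the unique contribution to $\#\homs(K_k,G)$, and then matching the resulting sum over isomorphism classes with a sum over labelled edge sets.

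First, for every $k$-vertex graph $H$ I would combine the inclusion–exclusion identity
\[
|\auts(H)| \cdot \#\indsubs(H,G) \;=\; \sum_{H' \supseteq H} (-1)^{|E(H')| - |E(H)|} \cdot \#\embs(H',G),
\]
where $H'$ ranges over supergraphs of $H$ on the vertex set $V(H)$, with Lovász's Möbius inversion over the partition lattice of $V(H')$ already recalled in the introduction,
\[
\#\embs(H',G) \;=\; \sum_{\rho} \mu(\emptyset,\rho) \cdot \#\homs(H'/\rho,\, G).
\]
Substituting the second identity into the first expresses $\#\indsubs(H,G)$ as a rational linear combination of homomorphism counts $\#\homs(F,G)$.

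The key observation is that $\#\homs(K_k,G)$ can appear only through a pair $(H',\rho)$ with $H'/\rho = K_k$. Since $|V(H'/\rho)| \leq |V(H')| = k$ with equality iff $\rho$ is the bottom element of the partition lattice, this forces $\rho = \emptyset$ and $H' = K_k$, so exactly one term of the double sum survives. Its contribution to $\#\indsubs(H,G)$ is $(-1)^{\binom{k}{2} - |E(H)|}/|\auts(H)|$, and summing over $H \in \Phi_k$ yields
\[
a(K_k) \;=\; (-1)^{\binom{k}{2}} \sum_{H \in \Phi_k} \frac{(-1)^{|E(H)|}}{|\auts(H)|}.
\]

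Finally, I would convert the right-hand side into the labelled sum. By orbit–stabiliser, for each isomorphism type $H \in \Phi_k$ there are exactly $k!/|\auts(H)|$ edge sets $A \in \epk$ whose induced graph on $[k]$ is isomorphic to $H$, all carrying the sign $(-1)^{|E(H)|}$. Hence
\[
\sum_{A \in \epk} (-1)^{\#A} \;=\; k!\, \sum_{H \in \Phi_k} \frac{(-1)^{|E(H)|}}{|\auts(H)|} \;=\; k! \cdot (-1)^{\binom{k}{2}} \cdot a(K_k),
\]
and since $k! \neq 0$ the claimed equivalence follows. I do not expect a genuine obstacle: the structural input is the \emph{uniqueness} of the pair $(H',\rho) = (K_k, \emptyset)$ producing $\#\homs(K_k,G)$, which is immediate from comparing vertex counts; the rest is careful bookkeeping between unlabelled isomorphism classes and labelled edge subsets.
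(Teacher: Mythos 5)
Your proposal is correct and follows essentially the same route as the paper's proof: inclusion--exclusion from induced subgraphs (via $\#\auts(H)$) to embeddings of edge-supergraphs, M\"obius inversion to homomorphisms, the vertex-count argument isolating the unique pair $(K_k,\emptyset)$, and the Orbit--Stabilizer theorem to pass from isomorphism classes weighted by $1/\#\auts(H)$ to labelled edge subsets. Your final identity even tracks the sign $(-1)^{\binom{k}{2}}$ explicitly, where the paper only records absolute values, but this makes no difference to the stated equivalence.
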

It turns out that for monotone properties, i.e., properties that are closed under the removal of edges, the term $\sum_{A \in \epk} (-1)^{\#A}$ is equal to the reduced Euler characteristic $\hat{\chi}$ of the simplicial graph complex of $\Phi_k$. Recall that a simplicial complex is a set of sets that is closed under taking non-empty subsets and a simplicial graph complex is a simplicial complex whose elements are subsets of the edges of the labeled complete graph. We will make this formal in Section~\ref{sec:prelims}. Applying Theorem~1 to monotone properties we hence obtain the following.

\begin{corollary}[Intuitive version] \label{cor:clique_coef_int_mono}
Let $\Phi$, $k$ and $a$ be as above and assume furthermore that $\Phi$ is monotone. Then $|k! \cdot a(K_k)| = \left| \hat{\chi}(\Delta(\Phi_k))\right|$ 
where $\Delta(\Phi_k)$ is the associated simplical graph complex of $\Phi_k$.
\end{corollary}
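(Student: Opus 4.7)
The plan is to combine the explicit quantitative form of Theorem \ref{thm:clique_coef_int} (of which the given ``intuitive version'' states only the vanishing equivalence) with a standard alternating-sum computation for the reduced Euler characteristic. The quantitative form, which I would establish simultaneously with Theorem \ref{thm:clique_coef_int}, should give an identity of the shape
\[
k! \cdot a(K_k) \;=\; \pm \sum_{A \in \epk} (-1)^{\#A}.
\]
The factor $k!$ arises because $\#\homs(K_k, G)$ counts \emph{labeled} clique-maps into $G$ while $\epk$ enumerates edge-subsets on the labeled vertex set $[k]$; the Möbius-type inversion that extracts the coefficient of the (unlabeled) graph $K_k$ from the homomorphism-basis expansion therefore picks up exactly the factor $|\auts(K_k)| = k!$. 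Taking absolute values then reduces the corollary to the purely combinatorial identity
\[
\left| \sum_{A \in \epk} (-1)^{\#A} \right| \;=\; \left| \hat{\chi}(\Delta(\Phi_k)) \right|.
\]

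To establish this identity I would use monotonicity of $\Phi$: since $\Phi$ is closed under edge removal, $\epk$ is closed under taking (possibly empty) subsets of edge-sets of $K_k$. If $\Phi_k$ is empty then both sides of the corollary vanish trivially; otherwise $\emptyset \in \epk$ by monotonicity, and $\Delta(\Phi_k) := \epk \setminus \{\emptyset\}$ is precisely the associated simplicial graph complex in the sense of Section~\ref{sec:prelims}. Applying the standard formula $\hat{\chi}(\Delta) = -1 + \sum_{\sigma \in \Delta} (-1)^{|\sigma| - 1}$ for the reduced Euler characteristic yields the direct computation
\[
\sum_{A \in \epk} (-1)^{\#A} \;=\; 1 + \sum_{A \in \Delta(\Phi_k)} (-1)^{\#A} \;=\; 1 - \bigl(\hat{\chi}(\Delta(\Phi_k)) + 1\bigr) \;=\; -\hat{\chi}(\Delta(\Phi_k)),
\]
and taking absolute values then combines with the quantitative form of Theorem~\ref{thm:clique_coef_int} to give the corollary.

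The main obstacle is not the combinatorial identity, which is essentially sign bookkeeping, but rather pinning down the precise quantitative version of Theorem~\ref{thm:clique_coef_int}: one must verify that the inversion producing $a(K_k)$ from $\Phi$ introduces exactly the factor $k!$ and no further multiplicative artifacts beyond an overall sign (which is absorbed by the absolute value). Once that is in place, the role of monotonicity in the proof is purely to ensure that $\epk$ genuinely coincides with the faces of a simplicial complex together with the empty face, so that its alternating face count equals $-\hat{\chi}(\Delta(\Phi_k))$ in the standard way.
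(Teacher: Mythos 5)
Your proposal is correct and takes essentially the same route as the paper: the quantitative identity $|k!\cdot a(K_k)| = \left|\sum_{A\in\epk}(-1)^{\#A}\right|$ you posit is exactly the paper's Theorem~\ref{thm:clique_coef_tech} (proved via inclusion--exclusion, M\"obius inversion, and orbit--stabilizer, with the $k!$ arising just as you describe), and your alternating-sum computation is the paper's Lemma~\ref{lem:clique_coef_int_mono_tech}, differing only in the sign convention for $\hat\chi$ (the paper sets $\hat\chi = 1-\chi$, you use its negative), which the absolute values absorb. One minor slip: under the paper's conventions the void complex has $\hat\chi(\emptyset)=1\neq 0$, so your claim that both sides vanish trivially when $\Phi_k=\emptyset$ is not quite right; the paper sidesteps this case by assuming throughout Section~\ref{sec:mono} that every monotone property holds on the independent set.
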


As computing the number of cliques of size $k$ is $\#\W$-complete~\cite{flumgrohe_counting} and computing a linear combination of homomorphisms is precisely as hard as computing its hardest term~\cite{hombasis2017}, Corollary~\ref{cor:clique_coef_int_mono} immediately resolves the complexity of $\#\indsubs(\Phi)$ whenever $\Phi$ is monotone and the reduced Euler characteristic of $\Delta(\Phi_k)$ is known to be non-zero for infinitely many $k$. Moreover, as the reduction in~\cite{hombasis2017} is tight, we also obtain a matching lower bound assuming the Exponential Time Hypothesis (ETH) if the set of such $k$ is dense. Here an infinite set $\mathcal{K}$ of natural numbers is \emph{\dense} if there exists a constant $c>0$ such that for all but finitely many $k \in \mathbb{N}$ there exists $k' \in \mathcal{K}$ such that $k\leq k' \leq c\cdot k$.

\begin{corollary}[Intuitive version] \label{cor:eul_non_zero_hard_int}
Let $\Phi$ be a monotone graph property such that $\hat{\chi}(\Delta(\Phi_k)) \neq 0$ for infinitely many $k$. Then the problem $\#\indsubs(\Phi)$ is $\#\W$-hard. If additionally the set of all $k$ such that $\hat{\chi}(\Delta(\Phi_k)) \neq 0$ is dense, it can not be solved in time $f(k)\cdot n^{o(k)}$ for any computable function $f$ unless ETH fails.
\end{corollary}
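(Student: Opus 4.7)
The plan is to combine Corollary~\ref{cor:clique_coef_int_mono} with the ``hardness of the hardest term'' theorem for graph motif parameters from~\cite{hombasis2017}, and then chain this with the classical $\#\W$-completeness and ETH-hardness of counting cliques. For each $k$, the framework of~\cite{hombasis2017} supplies a finitely supported coefficient function $a_k$ such that
\[
\sum_{H \in \Phi_k} \#\indsubs(H,G) \;=\; \sum_H a_k(H) \cdot \#\homs(H,G)
\]
for every host graph $G$. Corollary~\ref{cor:clique_coef_int_mono} identifies $|k! \cdot a_k(K_k)| = |\hat{\chi}(\Delta(\Phi_k))|$, so by hypothesis $a_k(K_k) \neq 0$ for every $k$ in the infinite set $\mathcal{K} := \{k : \hat{\chi}(\Delta(\Phi_k)) \neq 0\}$.

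For the $\#\W$-hardness I would invoke the main reduction of~\cite{hombasis2017}: non-vanishing of a coefficient $a_k(H)$ yields a parameterized Turing reduction from $\#\homs(H,\cdot)$ to evaluating the whole linear combination, i.e.\ to $\#\indsubs(\Phi)$ on instances of parameter~$k$, with only polynomial blow-up of the host graph. Applied with $H = K_k$ for each $k \in \mathcal{K}$, and using that on simple graphs $\#\homs(K_k,G) = k! \cdot \#\cc{Clique}(G,k)$, this Turing-reduces counting $k$-cliques (which is $\#\W$-complete by~\cite{flumgrohe_counting}) to $\#\indsubs(\Phi)$, proving $\#\W$-hardness.

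For the ETH bound I would use that the reduction of~\cite{hombasis2017} is tight: a hypothetical algorithm for $\#\indsubs(\Phi)$ running in time $f(k)\cdot n^{o(k)}$ would solve $\#\homs(K_k,\cdot)$ in time $g(k)\cdot n^{o(k)}$ for every $k \in \mathcal{K}$. To transfer this to \emph{every} sufficiently large clique parameter $m$, I would use density of $\mathcal{K}$ with constant $c$ to pick $k \in \mathcal{K}$ with $m \leq k \leq c\cdot m$, and reduce $\#\cc{Clique}(\cdot,m)$ to $\#\cc{Clique}(\cdot,k)$ via a standard padding argument (for instance, joining the input graph with constantly many small cliques and recovering $\#\cc{Clique}(\cdot,m)$ by Vandermonde-style interpolation from constantly many queries). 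Since $k \leq c\cdot m$ forces $n^{o(k)} \subseteq n^{o(m)}$, the composition solves $\#\cc{Clique}(\cdot,m)$ in $h(m) \cdot n^{o(m)}$ time for all large~$m$, contradicting the classical ETH lower bound for counting cliques.

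The main obstacle is the bookkeeping of the exponents across the three stages (the CDM reduction, the density-based parameter increase $m \mapsto k$, and the clique padding with interpolation): one has to confirm that each stage introduces only a polynomial factor in $n$ and a constant-factor change in the parameter, so that the overall running time stays in $n^{o(k)}$ and the density argument $n^{o(k)} \subseteq n^{o(m)}$ closes the loop. Tightness of the reduction in~\cite{hombasis2017} is designed to give exactly this, but explicitly tracking that the exponent never escapes the class $n^{o(k)}$ is the technically most delicate point.
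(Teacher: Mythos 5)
Your proposal is correct and follows essentially the same route as the paper: nonvanishing of $\hat{\chi}(\Delta(\Phi_k))$ gives $a(K_k)\neq 0$ via Lemma~\ref{lem:clique_coef_int_mono_tech}, Theorem~\ref{thm:cdm_main} extracts $\#\homs(K_k,\cdot)$ from the oracle for $\#\indsubs(\Phi)$, and density of $\mathcal{K}$ plus a padding step transfers the ETH lower bound for clique counting at every parameter value. The one place you diverge is the padding gadget: the paper (in the proof of Corollary~\ref{cor:criterion}) reduces from \emph{colorful} clique counting, where adding $k'-k$ universal vertices with fresh colors gives an exact bijection of solutions and no interpolation is needed (colorful counts are then recovered from uncolored ones by inclusion--exclusion over color classes), whereas you pad the uncolored problem by joining with cliques $K_t$ and inverting the resulting binomial system. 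Your variant works, but note that it is not ``constantly many small cliques and constantly many queries'': recovering $\#\cc{Clique}(G,m)$ from $\#\cc{Clique}(G * K_t,k)=\sum_{s\le t}\binom{t}{s}\#\cc{Clique}(G,k-s)$ requires $t$ to range up to $k-m$, which can be $\Theta(m)$, so you need $O(k)$ queries on graphs of size $n+O(k)$ --- still comfortably within an FPT reduction preserving the $n^{o(m)}$ bound, but the count should be stated correctly. Your final remark about exponent bookkeeping is exactly the point the paper's colorful-clique detour is designed to keep clean.
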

The (reduced) Euler characteristic is well-understood for many graph complexes. For\linebreak example, Chapt.~10.5 in the book of Jonsson~\cite{jonsson} provides a large list of graph properties, each of whose reduced Euler characteristics are non-zero infinitely often. For those properties Corollary~\ref{cor:eul_non_zero_hard_int} is hence applicable.

The study of the (reduced) Euler characteristic is, among others, motivated by Karp's famous evasiveness conjecture, stating that every non-trivial monotone graph property is evasive. A property $\Phi_k$ on graphs with $k$ vertices is evasive if every decision-tree algorithm that branches on the presence or absence of edges of a given graph $G$ needs to perform $\binom{k}{2}$ branches in the worst case to correctly decide whether $\Phi_k$ holds on $G$. We refer the reader to Miller's survey~\cite{evasiveness_survey} for a detailed introduction. While the conjecture is still unresolved, there has been a major breakthrough due to Khan, Saks and Sturtevant~\cite{topological_evasiveness} who proved the conjecture to be true whenever $k$ is a prime power. Their paper ``A Topological Approach to Evasiveness'' was, as the name suggests, the first one to use topological tools such as fixed-point complexes under group operations to prove evasiveness of a given graph complex. One of their results reads as follows\footnote{In fact, Khan, Saks and Sturtevant show that any non-evasive complex is collapsible. However, every collapsible complex has a reduced Euler characteristic of zero (see e.g.~\cite{evasiveness_lutz}). Hence the contraposition implies the theorem as stated.}.

\begin{theorem}[\cite{topological_evasiveness}] Let $\Phi_k$ be a non-trivial monotone graph property. If $\hat{\chi}(\Delta(\Phi_k)) \neq 0$ then $\Phi_k$ is evasive.
\end{theorem}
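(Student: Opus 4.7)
The plan is to establish the contrapositive: if $\Phi_k$ is a non-trivial monotone graph property that is \emph{not} evasive, then $\hat{\chi}(\Delta(\Phi_k)) = 0$. The argument proceeds in two conceptual moves, both already in the toolkit developed by Khan, Saks and Sturtevant, so the work is to unpack them rather than to invent anything new.

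First, I would invoke the structural lemma from~\cite{topological_evasiveness} stating that if a monotone graph property $\Phi_k$ is non-evasive, then the associated simplicial graph complex $\Delta(\Phi_k)$ is \emph{collapsible}. Intuitively, a non-evasive decision tree witnesses a recursive scheme for peeling off free faces of $\Delta(\Phi_k)$: at every internal node the two subtrees correspond to the link and the deletion of the queried edge, and the monotonicity of $\Phi$ guarantees that the edge queried on the ``no'' branch is a free face of the subcomplex processed on the ``yes'' branch. Iterating this correspondence from the leaves upward turns the decision tree into an explicit sequence of elementary collapses reducing $\Delta(\Phi_k)$ to a point.

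Second, I would argue that any collapsible simplicial complex has vanishing reduced Euler characteristic. Each elementary collapse removes a pair consisting of a simplex of some dimension $d$ together with a $(d{+}1)$-dimensional coface; this changes the alternating face count by $(-1)^d + (-1)^{d+1} = 0$, so $\hat{\chi}$ is invariant under collapses. Since $\Delta(\Phi_k)$ collapses to a single point, and a point has reduced Euler characteristic $0$, we conclude $\hat{\chi}(\Delta(\Phi_k)) = 0$. Taking the contrapositive yields the theorem.

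The only delicate step is the first one, turning a non-evasive decision tree into an actual collapsing sequence; this is where monotonicity of $\Phi_k$ is used in an essential way, and it is precisely the content of the combinatorial core of~\cite{topological_evasiveness}. Since that argument is already available in the literature, I would simply cite it; the second step is a standard exercise in combinatorial topology.
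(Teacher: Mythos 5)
Your proposal matches the paper's own justification exactly: the paper handles this statement in a footnote by citing Khan, Saks and Sturtevant for the implication ``non-evasive $\Rightarrow$ collapsible'' and then invoking the standard fact that collapsible complexes have vanishing reduced Euler characteristic, concluding by contraposition. Your additional sketches of the decision-tree-to-collapsing-sequence correspondence and of the invariance of $\hat{\chi}$ under elementary collapses are correct and simply flesh out the two cited steps.
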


Unfortunately, the converse of this theorem does not hold. A counterexample is given in Chapt.~10.6 in Jonsson's book~\cite{jonsson}. Nevertheless it turns out that some tools of the topological approach to evasiveness suit as well for a topological approach to $\#\W$-hardness of $\#\indsubs(\Phi)$. The most important ingredient in our proofs is a theorem that goes back to Smith~\cite{smith} (see also~\cite{oliver} and Chapt.~3 in~\cite{bredon}), intuitively stating that, given a simplicial complex $\Delta$ and a $p$-power group $\Gamma$ for some prime $p$ that operates on the ground set of $\Delta$ in a way that leaves the complex stable, it holds that 
\begin{equation}
\hat{\chi}(\Delta) \equiv \hat{\chi}(\Delta^\Gamma) \mod p\,,
\end{equation}
where $\Delta^\Gamma$ is the fixed-point complex of $\Delta$ with respect to $\Gamma$. Again, this will be made formal in Section~\ref{sec:prelims}. Applying this theorem to a rather simple group, we will be able to prove our main result which reads as follows:
\begin{theorem}\label{thm:main}
Let $\Phi$ be a non-trivial monotone graph property. Then $\#\indsubs(\Phi)$ is $\#\W$-hard and, assuming ETH, can not be solved in time $f(k)\cdot n^{o(k)}$ for any computable function~$f$ if at least one of the following conditions is true
\begin{enumerate}
\item $\Phi$ is false for odd cycles.
\item $\Phi$ is true for odd anti-holes.
\item There exists $c \in \mathbb{N}$ such that for all $H$ it holds that $\Phi(H) = 1$ if and only if $H$ is not $c$-edge-connected.
\item There exists a graph $F$ such that for all $H$ it holds that $\Phi(H) = 1$ if and only if there is no homomorphism from $F$ to $H$.
\end{enumerate}
\end{theorem}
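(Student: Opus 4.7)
My plan is to reduce each of the four conditions to an application of Corollary~\ref{cor:eul_non_zero_hard_int} via the Smith-type theorem stated just before Theorem~\ref{thm:main}. For every odd prime $p$ I take $\Gamma = \Z/p$ acting by cyclic rotation on $[p]$, so that $\hat\chi(\Delta(\Phi_p)) \equiv \hat\chi(\Delta(\Phi_p)^\Gamma) \pmod{p}$; the density of primes then supplies the dense set of $k$ demanded by the corollary. Under this action the edge set of $K_p$ splits into $(p-1)/2$ orbits $O_d := \{\{i,i+d\bmod p\} : i \in \Z/p\}$, each a Hamilton cycle, so the fixed-point complex has ground set $\{O_1,\dots,O_{(p-1)/2}\}$ and a subset $S$ is a face iff the circulant graph $G_S := \bigcup_{d\in S} O_d$ satisfies $\Phi$. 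The multiplicative group $(\Z/p)^*$ acts transitively on this ground set (by multiplication of the index) and preserves the isomorphism type of each $G_S$, endowing the fixed-point complex with considerable further symmetry.

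Conditions 1--3 then follow by identifying $\Delta(\Phi_p)^\Gamma$ explicitly. Under condition~1, $\Phi(C_p) = 0$ removes every singleton, and monotonicity collapses the complex to $\{\emptyset\}$ with $\hat\chi = -1$. Under condition~2, $\overline{C_p}$ is a codimension-$1$ face of the full simplex on $[(p-1)/2]$; $(\Z/p)^*$-transitivity promotes this to every codimension-$1$ face, monotonicity adds every lower face, and non-triviality of $\Phi$ excludes only the top face $K_p$ for $p$ beyond a finite threshold, yielding the boundary sphere $\partial\Delta_{(p-3)/2}$ with $\hat\chi = (-1)^{(p-5)/2}$. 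Under condition~3 with $c \geq 3$, Mader's theorem on connected vertex-transitive graphs gives $\lambda(G_S) = 2|S|$, so $\Delta(\Phi_p)^\Gamma$ is the $(\lceil c/2\rceil-2)$-skeleton of the $((p-3)/2)$-simplex, with $|\hat\chi| = \binom{(p-3)/2}{\lceil c/2\rceil-1}$, non-zero modulo $p$ by Lucas's theorem whenever $p$ exceeds a constant depending only on $c$; the cases $c \in \{1,2\}$ are absorbed by condition~1 because $C_p$ is $2$-edge-connected.

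Condition~4 is the main obstacle. If $F$ is bipartite the property degenerates to ``$G$ is edgeless'' and condition~1 applies. If $F$ is non-bipartite with odd girth $g$, classical results on homomorphisms between odd cycles give $F \not\to C_p$ for primes $p>g$ and $F \to K_p$ for $p \geq \chi(F)$, so singletons lie in $\Delta(\Phi_p)^\Gamma$ while the top face does not; however, whether $F \to G_S$ genuinely depends on the $(\Z/p)^*$-orbit of $S$ and not merely on $|S|$, so $\Delta(\Phi_p)^\Gamma$ need not be a skeleton. My plan here is to combine the leftover $(\Z/p)^*$-symmetry with an $F$-dependent structural lemma: for $p$ sufficiently large relative to $|V(F)|$, the existence of a homomorphism $F \to G_S$ should depend only on boundedly many invariants of $S$, for instance its size and its orbit under a prime-order subgroup $H_q \leq (\Z/p)^*$ with $q \mid p-1$. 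A second Smith-type reduction by $H_q$ would then leave a very small complex whose reduced Euler characteristic is a fixed binomial expression in $p$, once more non-zero modulo $p$ by a Lucas-type argument. The delicate step is this structural lemma, which I expect will rest on a Ramsey- or pigeonhole-type count of short closed walks in Cayley graphs of $\Z/p$, forcing a homomorphism from $F$ once $|S|$ exceeds a threshold depending only on $|V(F)|$.
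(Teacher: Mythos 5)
Your treatment of conditions 1--3 is correct and is essentially the paper's argument: the paper likewise applies Theorem~\ref{thm:smith} with $\Gamma=\mathbb{Z}_p$, identifies the orbits with the Hamilton cycles $H_1,\dots,H_{(p-1)/2}$, and reads off the fixed-point complex as (respectively) the empty complex, the boundary of the $((p-3)/2)$-simplex minus nothing but the top face, and a skeleton of that simplex whose reduced Euler characteristic is a small binomial coefficient, nonzero mod $p$ for $p$ large. The only cosmetic differences are your sign convention $\hat\chi(\emptyset)=-1$ versus the paper's $\hat\chi=1$ (irrelevant mod $p$), your use of Mader's theorem where the paper argues directly with $\lfloor c/2\rfloor+1$ edge-disjoint Hamilton cycles, and your appeal to $(\mathbb{Z}/p)^*$-transitivity for the anti-hole case, which is unnecessary since every codimension-$1$ face is the complement of a single orbit $H_i$, hence an odd anti-hole outright.

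Condition 4, however, is a genuine gap, and the route you sketch does not close it. First, your conjectured structural lemma is false: for $F=K_3$ the middle-third set $S=\{d : p/3<d\le (p-1)/2\}$ has $|S|=\Theta(p)$ yet $G_S$ is triangle-free (its symmetric connection set is sum-free in $\mathbb{Z}/p$), so no threshold on $|S|$ depending only on $|V(F)|$ can force $F\to G_S$, and $F\to G_S$ really does depend on the additive structure of $S$, not on boundedly many invariants. Second, the proposed ``second Smith-type reduction'' by a subgroup $H_q\le(\mathbb{Z}/p)^*$ of prime order $q$ yields a congruence for $\hat\chi(\Delta^{\mathbb{Z}_p}(\Phi_p))$ modulo $q$, whereas what you need to feed back into the first reduction is non-vanishing modulo $p$; the two congruences do not chain, and the combined group $\mathbb{Z}_p\rtimes H_q$ has order $pq$, so Theorem~\ref{thm:smith} does not apply to it. The paper avoids all of this by invoking a result of Chakrabarti, Khot and Shi (Lemma~\ref{lem:chakrabarti}): for $k\equiv 1 \bmod T_F$ with $T_F=\min\{2^{2^t}-1 : 2^{2^t}\ge \#V(F)\}$, the Euler characteristic $\chi(\Delta(\Phi^{[F]}_k))$ is even, hence $\hat\chi$ is odd and in particular nonzero; since this set of $k$ is dense, Corollary~\ref{cor:criterion_mono} applies directly, with no need to analyze the $\mathbb{Z}_p$-fixed-point complex for this family at all. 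If you want a self-contained argument you would need to reprove that lemma (which uses a different, affine-type group action over a field of size $2^{2^t}$ and works modulo $2$), not refine the cyclic-shift analysis.
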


We remark that Rivest and Vuillemin~\cite{rivest} implicitly proved that the reduced Euler characteristic of a graph complex does not vanish if the first condition is true. Furthermore we note that (non-)triviality of a monotone property needs to be defined with some care to exclude properties that depend only on the number of vertices of a graph. Details are given in Section~\ref{sec:mono}. Examples of properties that satisfy the first condition are the ones of being bipartite, cycle-free, disconnected and non-hamiltonian. One example for the second condition is the property of having a chromatic number smaller or equal than half of the size of the graph (rounded up) and the third condition includes the properties of exclusion of a fixed complete graph as a subgraph.

Finally, we investigate $\#\indsubs(\Phi)$ for two families of non-monotone properties. For the first one, let $q$ be a prime and $\mathcal{Q}$ be a subset of $\{0,\dots,q-1\}$. Then the property $\modqa[q,\mathcal{Q}]$ holds on a graph $H$ if and only if $\left(\#E(H)~ \mathsf{mod} ~q\right) \in \mathcal{Q}$. For the second one, let $F$ be a connected graph. Then the property $\isof[F]$ holds on a graph $H$ if and only if $H$ contains an isolated subgraph that is isomorphic to $F$.

\begin{theorem}\label{thm:main_non_mono}
For all primes $q$, non-trivial subsets $\mathcal{Q}$ of $\{0,\dots,q-1\}$ and connected graphs~$F$, the problems $\#\indsubs(\modqa[q,\mathcal{Q}])$ and $\#\indsubs(\isof[F])$ are $\#\W$-hard and can not be solved in time $f(k)\cdot n^{o(k)}$ for any computable function $f$, unless ETH fails.
\end{theorem}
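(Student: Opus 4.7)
Both parts are established by applying Theorem~\ref{thm:clique_coef_int} inside the graph motif framework of Curticapean, Dell, and Marx~\cite{hombasis2017}. For each $k$ we write $\#\indsubs(\Phi,G,k) = \sum_{H} a(H)\,\#\homs(H,G)$; by the hardness-of-the-hardest-term principle of~\cite{hombasis2017}, it is enough to show that the clique coefficient $a(K_k)$ is non-zero on a dense set of $k$, since counting $k$-cliques is $\#\W$-complete~\cite{flumgrohe_counting} and carries the $f(k)\cdot n^{o(k)}$ lower bound under ETH. Theorem~\ref{thm:clique_coef_int} identifies non-vanishing of $a(K_k)$ with non-vanishing of the alternating sum $T_\Phi(k) := \sum_{A \in \epk}(-1)^{\#A}$, so both claims reduce to a combinatorial analysis of this sum.

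For $\Phi = \modqa[q,\mathcal{Q}]$ the summand depends only on $\#A$; grouping by edge count and setting $N = \binom{k}{2}$ yields $T_\Phi(k) = \sum_j [\,j \bmod q \in \mathcal{Q}\,]\,(-1)^j\binom{N}{j}$. Inserting the standard roots of unity filter for $\mathbf{1}_{\mathcal{Q}}$ and applying the binomial theorem collapses this into the exponential sum
\[
T_\Phi(k) \;=\; \frac{1}{q}\sum_{t=1}^{q-1}\alpha_t\,(1-\omega^t)^{N}, \qquad \omega = e^{2\pi i / q},\ \alpha_t = \sum_{r \in \mathcal{Q}} \omega^{-tr},
\]
with the $t=0$ term killed by $(1-1)^N = 0$. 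Non-triviality of $\mathcal{Q}$ forces $\mathbf{1}_{\mathcal{Q}}$ to be non-constant on $\mathbb{Z}/q\mathbb{Z}$, so some $\alpha_t$ with $t\neq 0$ is non-zero and $N \mapsto T_\Phi$ is a non-trivial $\mathbb{C}$-linear combination of $N$-th powers of the distinct non-zero algebraic numbers $1-\omega^1,\dots,1-\omega^{q-1}$. The main obstacle is upgrading non-triviality to density: Skolem--Mahler--Lech describes the zero set $\{N : T_\Phi = 0\}$ as a finite union of arithmetic progressions plus a finite set, so since $T_\Phi \not\equiv 0$ there exist a modulus $D$ and a residue $r$ with $T_\Phi \neq 0$ whenever $N \equiv r \pmod D$; one then checks that the quadratic sequence $\binom{k}{2} \bmod D$ realises $r$ along an arithmetic progression of $k$ of positive density, delivering the required dense set.

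For $\Phi = \isof[F]$ with $f := |V(F)|$, for every $f$-subset $S \subseteq [k]$ let $X_S$ denote the edge sets $A$ satisfying $A|_S \cong F$ with no edge of $A$ running between $S$ and $[k]\setminus S$. Because $F$ is connected, $A \in \epk$ iff $A \in X_S$ for some $S$, so inclusion--exclusion gives
\[
T_\Phi(k) \;=\; \sum_{\emptyset \neq \mathcal{S}}(-1)^{|\mathcal{S}|+1}\sum_{A \in \bigcap_{S \in \mathcal{S}}X_S}(-1)^{\#A}.
\]
If two distinct $S,S' \in \mathcal{S}$ meet in a proper non-empty subset, connectivity of $F$ forces an edge of $A|_S$ from $S\cap S'$ to $S\setminus S'$, which is forbidden by $A \in X_{S'}$, so only pairwise disjoint collections contribute. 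For such $\mathcal{S}$ with $|\mathcal{S}|=m$, the edges of $A$ outside $\bigcup\mathcal{S}$ are unconstrained, and the factor $\sum_{A' \subseteq E(K_\ell)}(-1)^{\#A'}$ on the $\ell := k-mf$ leftover vertices vanishes as soon as $\ell \geq 2$. Retaining only $\ell \in \{0,1\}$ and counting labelled placements of $F$ (each $S_i$ contributing $\frac{f!}{|\auts(F)|}(-1)^{\#E(F)}$) gives, for $k = mf$, the value
\[
T_\Phi(k) \;=\; (-1)^{m(1+\#E(F))+1}\,\frac{k!}{m!\,|\auts(F)|^{m}} \;\neq\; 0,
\]
and as the multiples of $f$ form a dense subset of $\mathbb{N}$, the reduction is complete.
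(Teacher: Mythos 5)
Your treatment of $\isof[F]$ is essentially the paper's own argument (Lemma~\ref{lem:non_mono_two}): inclusion--exclusion over placements of $F$, vanishing of improperly intersecting collections by connectivity, and the observation that the free part contributes $(1-1)^{\binom{\ell}{2}}$, which kills everything except $\ell\in\{0,1\}$. One small slip: for $f=1$ \emph{two} values of $m$ (namely $m=k-1$ and $m=k$) survive, so your closed formula is off in that case (the correct value is $(-1)^k(k-1)$ for $k\ge 2$); non-vanishing still holds, but the case deserves a separate line as in the paper.

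The $\modqa[q,\mathcal{Q}]$ part has a genuine gap in the last step. Up to the identity
$T_\Phi(k)=\frac{1}{q}\sum_{t=1}^{q-1}\alpha_t(1-\omega^t)^N$ with $N=\binom{k}{2}$ and some $\alpha_t\neq 0$, you agree with the paper. But the passage from Skolem--Mahler--Lech to density fails: SML tells you the zero set of $N\mapsto T_\Phi(N)$ is a finite union of arithmetic progressions plus a finite set, hence that there is a modulus $D$ and a residue $r$ with $T_\Phi(N)\neq 0$ for all large $N\equiv r\pmod D$. Your final claim that ``the quadratic sequence $\binom{k}{2}\bmod D$ realises $r$'' is false in general: $\binom{k}{2}\bmod D$ omits residues (e.g.\ $\binom{k}{2}\not\equiv 2\pmod 3$ for every $k$, since $k(k-1)\not\equiv 1\pmod 3$). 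So the particular good residue class handed to you by SML may be disjoint from the set of triangular numbers, and nothing in your argument rules out that \emph{all} $\binom{k}{2}$ for large $k$ land in bad classes. What your argument genuinely shows is only the conditional statement that if $T_\Phi(\binom{k}{2})\neq 0$ for infinitely many $k$ then the set of such $k$ is dense (by periodicity of $\binom{k}{2}\bmod D$); the infinitude itself is the missing content. The paper closes exactly this hole by a quantitative analysis instead of SML: it identifies the two dominant eigenvalues $z_\pm=1-\omega^{\mp(q\pm1)/2}$, shows they are positive multiples of primitive $4q$-th roots of unity, so that the dominant term $t_m$ has only $4q$ possible arguments, and then verifies directly that among $m=\binom{n}{2},\binom{n+1}{2},\binom{n+2}{2}$ at least one makes $t_m$ not purely imaginary, whence its real part outweighs all remaining terms for large $m$. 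Some argument of this concrete kind (or another way to certify a non-zero value on a dense set of $k$) is needed to complete your proof.
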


\subsection{Related work}\label{sec:related_work}
Jerrum and Meeks introduced and studied the problem $\#\indsubs(\Phi)$ for the following properties. In~\cite{indsubs_connected} they prove the problem to be $\#\W$-hard if $\Phi$ is the property of being connected, which immediately follows from Theorem~\ref{thm:main} as $\#\indsubs(\Phi)$ and $\#\indsubs(\neg\Phi)$ are equivalent\footnote{We just need to substract one from $\binom{n}{k}$ to get the other.} and the property of being disconnected is monotone and false for every cycle. In~\cite{indsubs_eveness} hardness is established for the property of having an even (or odd) number of edges, which is subsumed by Theorem~\ref{thm:main_non_mono}. Indeed, the case of $q=2$ follows already from Theorem~\ref{thm:clique_coef_int} as every term in the sum $\sum_{A \in \epk} (-1)^{\#A}$ will have the same sign. In~\cite{indsubs_density} Jerrum and Meeks prove the problem to be $\#\W$-hard whenever the edge-density of graphs in $\Phi_k$ grows asymptotically slower than $k^2$ and in~\cite{indsubs_treewidth} Meeks shows that whenever $\Phi$ is co-monotone, i.e., $\neg \Phi$ is monotone, and the set of (edge-)minimal elements of $\Phi$ has unbounded treewidth, the problem is hard as well. 

Those latter results are independent from ours in the sense that ours do not imply theirs and vice versa. One example of a property whose hardness does not follow from the results of Jerrum and Meeks is bipartiteness: The edge-densities of both, the properties of being bipartite and not bipartite grow asymptotically as fast as $k^2$ and the edge-minimal non-bipartite graphs are odd cycles, hence having treewidth $2$. However hardness for the property of being bipartite follows from the first condition of Theorem~\ref{thm:main} as odd cycles are not bipartite. Moreover, we remark that Meek's reduction in~\cite{indsubs_treewidth} uses the Excluded Grid Theorem and hence does not imply a tight lower bound under ETH. 

The remainder of the paper is structured as follows. In Section~\ref{sec:prelims} we introduce the necessary background in parameterized (counting) complexity, graph theory as well as in the theory of transformation groups and simplicial complexes. In Section~\ref{sec:hombasis} we give a formal introduction to graph motif parameters and prove Theorem~\ref{thm:clique_coef_int}. This is followed by the analysis of monotone properties and the proof of Theorem~\ref{thm:main} in Section~\ref{sec:mono}. In particular, we will take a close look at the fixed-point set of the group operation on labeled graphs that induces a cyclic shift on the vertices. Finally we discuss non-monotone properties and prove Theorem~\ref{thm:main_non_mono} in Section~\ref{sec:non_mono}.

\section{Preliminaries}\label{sec:prelims}
First we will introduce some basic notions. Given a finite set $S$, we write $\#S$ for the cardinality of $S$. We say that an infinite set $\mathcal{K}$ of natural numbers is \emph{\dense} if there exists a constant $c>0$ such that for all but finitely many $k \in \mathbb{N}$ there exists $k' \in \mathcal{K}$ such that $k\leq k' \leq c\cdot k$. Given a function $a$ from a (not necessarily finite) set $S$ to rational numbers, the \emph{support} of $a$ is the set of elements $s\in S$ such that $a(s) \neq 0$. We write $\mathsf{supp}(a)$ for the support of $a$. Given a natural number $k$, we write $[k]$ for the set $\{0,\dots,k-1\}$. Given a finite group $\Gamma$ of order $p^s$ for some prime $p$ and natural number $s$, we say that $\Gamma$ is a $p$-\emph{power group}. 

\paragraph*{Graph theory}
In this work all graphs are considered to be undirected, simple and to not contain self-loops. Given a graph $G$ we write $V(G)$ for the vertices and $E(G)$ for the edges of $G$. We denote the complete graph on $\ell$ vertices as $K_\ell$. A \emph{labeled} graph is a graph $G$ with a bijective labeling $\ell: V(G) \rightarrow \left[\#V(G)\right]$ of the vertices and we will sloppily identify vertices with their labels. A \emph{subgraph} of $G$ is a graph obtained from $G$ by deleting vertices (including incident edges) and/or edges. Given a subset $S \subseteq V(G)$, the \emph{induced subgraph} $G[S]$ is the graph with vertices $S$ and edges $E(G) \cap S^2$.

A \emph{homomorphism} from a graph $H$ to a graph $G$ is a function $\varphi: V(H) \rightarrow V(G)$ that is edge-preserving, i.e. for every edge $\{u,v\} \in E(H)$ it holds that $\{\varphi(u),\varphi(v) \} \in E(G)$. We write $\homs(H,G)$ for the set of all homomorphisms from $H$ to $G$. A homomorphism $\varphi$ is called an \emph{embedding} if $\varphi$ is injective. We write $\embs(H,G)$ for the set of all embeddings from $H$ to $G$. An \emph{isomorphism} from a graph $H$ to a graph $G$ is a bijective homomorphism. We say that $H$ and $G$ are \emph{isomorphic}, denoted by $H \cong G$, if such an isomorphism exists and we denote $\mathcal{G}$ as the set of all (isomorphism types of) graphs. An \emph{automorphism} of a graph $H$ is an isomorphism from $H$ to $H$. We write $\auts(H)$ for the set of all automorphisms of $H$. An embedding $\varphi$ from $H$ to $G$ is called a \emph{strong embedding} if for all vertices $u,v \in V(H)$ it holds that $\{u,v\} \in E(H) \Leftrightarrow \{\varphi(u),\varphi(v) \} \in E(G)$. We write $\strembs(H,G)$ for the set of all strong embeddings from $H$ to $G$. 

Given graphs $H$ and $G$, we write $\subs(H,G)$ for the set of all subgraphs of $G$ that are isomorphic to $H$ and $\indsubs(H,G)$ for the set of all induced subgraphs of $G$ that are isomorphic to $H$.

\begin{fact}\label{fact:auto}
For all graphs $H$ and $G$ it holds that $\#\embs(H,G) = \#\subs(H,G) \cdot \#\auts(H)$ and that $\#\strembs(H,G) = \#\indsubs(H,G) \cdot \#\auts(H)$.
\end{fact}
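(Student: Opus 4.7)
The plan is to prove both equalities by the same orbit-counting argument: exhibit a surjection from the set of (strong) embeddings onto the set of (induced) subgraphs isomorphic to $H$, and show that every fiber has cardinality exactly $\#\auts(H)$, because $\auts(H)$ acts freely on the relevant set of maps by precomposition.

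First I would handle $\#\embs(H,G) = \#\subs(H,G) \cdot \#\auts(H)$. Define $\pi : \embs(H,G) \to \subs(H,G)$ by sending $\varphi$ to its image graph, i.e.\ the subgraph of $G$ with vertex set $\varphi(V(H))$ and edge set $\{\{\varphi(u),\varphi(v)\} : \{u,v\} \in E(H)\}$. Since $\varphi$ is an injective homomorphism, this image is isomorphic to $H$, so $\pi$ lands in $\subs(H,G)$. Conversely, for every $H' \in \subs(H,G)$ one may pick an isomorphism $H \to H'$ and compose with the inclusion $H' \hookrightarrow G$ to obtain an embedding mapped to $H'$ by $\pi$; hence $\pi$ is surjective.

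Next I would analyse the fibers. Fix $H' \in \subs(H,G)$ and pick $\varphi_0 \in \pi^{-1}(H')$, viewed as an isomorphism $H \to H'$. For any other $\psi \in \pi^{-1}(H')$ the composition $\varphi_0^{-1} \circ \psi$ is an automorphism of $H$; conversely, precomposing $\varphi_0$ with any $\sigma \in \auts(H)$ produces an embedding in $\pi^{-1}(H')$. This identifies $\pi^{-1}(H')$ with $\auts(H)$, and the right action $\sigma \cdot \varphi_0 = \varphi_0 \circ \sigma$ is free because $\varphi_0$ is injective. Summing $\#\auts(H)$ over the surjection $\pi$ yields the first identity.

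For the second identity I would repeat the construction verbatim with $\embs$ replaced by $\strembs$ and $\subs$ replaced by $\indsubs$. The one point to verify is that $\pi$ restricts to a surjection $\strembs(H,G) \to \indsubs(H,G)$: a strong embedding preserves both presence and absence of edges, so $\varphi(H)$ coincides with the induced subgraph $G[\varphi(V(H))]$, and conversely every induced subgraph isomorphic to $H$ arises this way. The fiber analysis is unchanged, since automorphisms of $H$ are automatically strong embeddings and precomposition preserves the strong-embedding property. No real obstacle arises; the only subtlety worth stating explicitly is the freeness of the $\auts(H)$-action, which follows immediately from injectivity.
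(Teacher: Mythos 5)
Your argument is correct and complete: the orbit-counting argument via the image map $\pi$ with fibers in free bijection with $\auts(H)$ is the standard proof of this classical fact. The paper states this as a Fact without proof, so there is nothing to compare against; your write-up, including the check that strong embeddings surject onto induced subgraphs and that precomposition by automorphisms acts freely, is exactly the expected justification.
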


A \emph{graph property} $\Phi$ is a function from graphs to $\{0,1\}$ such that $\Phi(G) = \Phi(G')$ whenever $G$ and $G'$ are isomorphic. We say that $\Phi$ holds on $G$ if $\Phi(G) = 1$ and we are not going not distinguish between $\Phi$ and the set of graphs for which $\Phi$ holds as this will be clear from the context. We write $\Phi_k$ for the set of all (isomorphism types of) graphs with $k$ vertices on which $\Phi$ holds. For technical reasons we define $\epk$ to be the set of all edge-subsets $A$ of the labeled complete graph with $k$ vertices such that $\Phi$ holds on the graph with the same vertices and edges $A$. A graph property is called \emph{monotone} if it is closed under the removal of edges, that is, if $G'$ is obtained from $G$ by removing edges and $\Phi$ holds for $G$, then $\Phi$ holds for $G'$ as well. A property is called \emph{co-monotone} if its complement is monotone\footnote{We remark that in some literature, e.g. \cite{indsubs_treewidth,rivest}, the notions of monotonicity and co-monotonicity are reversed.}.

\paragraph*{Transformation groups and simplical (graph) complexes}
Let $\Omega$ be a finite set. A \emph{simplicial complex} over the ground set $\Omega$ is a set $\Delta$ of non-empty subsets of $\Omega$ such that whenever a set $A$ is contained in $\Delta$ and $A'$ is a non-empty subset of $A$, then $A'$ is contained in $\Delta$ as well. An element $A$ of $\Delta$ is called a \emph{simplex} and the \emph{dimension} of $A$, denoted as $\mathsf{dim}(A)$, is defined to be $\#A -1$. The \emph{Euler characteristic} $\chi$ of a simplical complex $\Delta$ is defined to be $\chi(\Delta) :=\sum_{i\geq 0} (-1)^i \cdot \#\{ A \in \Delta ~|~\mathsf{dim}(A)=i \}$ and the \emph{reduced Euler characteristic} of $\Delta$ is defined to be $ \hat{\chi}(\Delta) := 1-\chi(\Delta)$.

\begin{fact}\label{fact:euler_red_euler}
$\hat{\chi}(\Delta) = \sum_{i\geq 0} (-1)^i \cdot \#\{ A \in \Delta\cup\{ \emptyset\} ~|~\#A=i \} $.
\end{fact}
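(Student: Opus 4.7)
The plan is to directly evaluate the right-hand side of the claimed identity by splitting off the $i=0$ contribution (which comes solely from the empty set) and reindexing the remaining terms so as to match the definition $\hat{\chi}(\Delta) = 1 - \chi(\Delta)$.

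First I would observe that every simplex of $\Delta$ is by definition a non-empty subset of the ground set, so $\emptyset \notin \Delta$. Consequently $\{A \in \Delta \cup \{\emptyset\} : \#A = 0\} = \{\emptyset\}$, contributing $(-1)^0 \cdot 1 = 1$ to the sum on the right-hand side. For $i \geq 1$, the empty set is excluded, so the condition $\#A = i$ singles out precisely those $A \in \Delta$ with $\mathsf{dim}(A) = i-1$, using the definition $\mathsf{dim}(A) = \#A - 1$.

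Next I would perform the substitution $j = i - 1$ in the remaining part of the sum, which turns $\sum_{i \geq 1} (-1)^i \cdot \#\{A \in \Delta : \mathsf{dim}(A) = i-1\}$ into $-\sum_{j \geq 0} (-1)^j \cdot \#\{A \in \Delta : \mathsf{dim}(A) = j\} = -\chi(\Delta)$. Adding back the $i=0$ contribution yields $1 - \chi(\Delta)$, which equals $\hat{\chi}(\Delta)$ by the definition in the preliminaries.

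No serious obstacle arises; the statement is essentially an index-shift identity that encodes the standard convention of assigning the empty simplex dimension $-1$ and weight $+1$, packaging the reduced Euler characteristic as a single alternating sum over $\Delta \cup \{\emptyset\}$ indexed by cardinality rather than by dimension.
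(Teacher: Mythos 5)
Your proof is correct and is essentially the argument the paper relies on (it appears implicitly in the index shift carried out in the proof of Lemma~\ref{lem:clique_coef_int_mono_tech}): isolate the $i=0$ term contributed by $\emptyset$, identify cardinality-$i$ sets with dimension-$(i-1)$ simplices, and reindex to recover $1-\chi(\Delta)=\hat{\chi}(\Delta)$. No issues.
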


Given a simplicial complex $\Delta$ and a finite group $\Gamma$ that operates on the ground set $\Omega$ of $\Delta$, we say that $\Delta$ is a \emph{$\Gamma$-simplicial complex} if the induced action of $\Gamma$ on subsets of $\Omega$ preserves $\Delta$. More precisely, if $A \in \Delta$ and $g \in \Gamma$ then the set $g \triangleright A := \{g \triangleright a ~|~ a \in A\}$ is contained in $\Delta$ as well. If this is the case we can define the \emph{fixed-point complex} $\Delta^\Gamma$ as follows. Let $\mathcal{O}_1,\dots,\mathcal{O}_{k}$ be the orbits of $\Omega$ with respect to $\Gamma$. Then
\[ \Delta^\Gamma := \left\lbrace S \subseteq \{1,\dots,k\} ~\middle|~ S \neq \emptyset \wedge \bigcup_{i \in S} \mathcal{O}_i \in \Delta \right\rbrace \]

The following theorem, which is due to Smith~\cite{smith} (see also~\cite{oliver} and Chapt.~3 in~\cite{bredon}), will be of crucial importance in Section~\ref{sec:mono}.

\begin{theorem}\label{thm:smith}
Let $\Gamma$ a group of order $p^s$ for some prime $p$ and natural number $s$ and let $\Delta$ be a $\Gamma$-simplicial complex. Then $\chi(\Delta) \equiv \chi(\Delta^\Gamma) \mod p$ and hence $\hat{\chi}(\Delta) \equiv \hat{\chi}(\Delta^\Gamma) \mod p$.
\end{theorem}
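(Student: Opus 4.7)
The plan is to prove the statement about $\chi(\Delta)$ by a direct orbit-counting argument; the congruence for $\hat{\chi}$ then follows immediately from $\hat{\chi}=1-\chi$.

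First, I would note that because $\Delta$ is a $\Gamma$-simplicial complex, the induced action of $\Gamma$ preserves dimension, so $\Gamma$ acts on the set $\Delta_i$ of $i$-dimensional simplices of $\Delta$ for every $i\geq 0$. By the orbit--stabilizer theorem every orbit of this action has cardinality dividing $|\Gamma|=p^{s}$, hence is a power of $p$, and is therefore $\equiv 0 \pmod{p}$ unless it is a singleton, i.e.\ a $\Gamma$-fixed simplex. Writing $F_i$ for the number of $\Gamma$-fixed $i$-simplices of $\Delta$, this yields $\#\Delta_i \equiv F_i \pmod{p}$ for each $i$, so
\[
\chi(\Delta)\;=\;\sum_{i\geq 0}(-1)^i\,\#\Delta_i \;\equiv\; \sum_{i\geq 0}(-1)^i F_i \pmod{p}.
\]

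Next, I would characterise the fixed simplices combinatorially. A simplex $A\subseteq\Omega$ satisfies $g\triangleright A = A$ for every $g\in\Gamma$ if and only if $A$ is a union of $\Gamma$-orbits on $\Omega$. Consequently, the assignment $S\mapsto \bigcup_{j\in S}\mathcal{O}_j$ is a bijection between the simplices of $\Delta^\Gamma$ and the $\Gamma$-fixed simplices of $\Delta$.

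The subtle step, which I expect to be the main obstacle, is the dimension mismatch under this bijection: the simplex $A_S:=\bigcup_{j\in S}\mathcal{O}_j$ has $\#A_S = \sum_{j\in S}\#\mathcal{O}_j$, whereas $S$ viewed as a simplex of $\Delta^\Gamma$ has cardinality $\#S$. I would resolve this by exploiting that every $\#\mathcal{O}_j$ is a power of $p$. If $p$ is odd, then each $\#\mathcal{O}_j$ is odd, so $\#A_S\equiv \#S \pmod{2}$ and therefore $(-1)^{\#A_S-1}=(-1)^{\#S-1}$ as integers. If $p=2$, then $-1\equiv 1\pmod{p}$, so both signs are trivially $\equiv 1 \pmod{p}$. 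In either case,
\[
\sum_{i\geq 0}(-1)^i F_i \;\equiv\; \sum_{\emptyset\neq S\in\Delta^\Gamma}(-1)^{\#S-1} \;=\; \chi(\Delta^\Gamma) \pmod{p}.
\]
Combining with the first step gives $\chi(\Delta)\equiv \chi(\Delta^\Gamma)\pmod{p}$, from which $\hat{\chi}(\Delta)=1-\chi(\Delta)\equiv 1-\chi(\Delta^\Gamma)=\hat{\chi}(\Delta^\Gamma)\pmod{p}$ as claimed.
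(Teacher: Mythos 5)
Your argument is correct. Note that the paper does not prove this theorem at all: it is quoted as a known result of Smith, with pointers to Oliver and to Chapter~3 of Bredon, where the standard proofs proceed via equivariant (co)homology and Smith theory proper. Your proof is the elementary combinatorial route to the Euler-characteristic version of that result, and all three steps check out: (i) the orbit--stabilizer count showing $\#\Delta_i \equiv F_i \pmod{p}$ is valid because every orbit size divides $p^{s}$ and is therefore a power of $p$; (ii) the identification of the $\Gamma$-fixed simplices with the nonempty unions of ground-set orbits lying in $\Delta$ matches exactly the paper's definition of $\Delta^\Gamma$, so $S \mapsto \bigcup_{j\in S}\mathcal{O}_j$ really is a bijection onto the fixed simplices; and (iii) you correctly isolate and resolve the one genuinely delicate point, the sign discrepancy $(-1)^{\#A_S-1}$ versus $(-1)^{\#S-1}$, by splitting into $p$ odd (where each orbit has odd size, so $\#A_S\equiv\#S\pmod 2$ and the signs agree as integers) and $p=2$ (where signs are irrelevant modulo $p$). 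Compared with the homological treatment in the cited sources, your proof is weaker in that it only yields the Euler-characteristic congruence rather than the finer statements of Smith theory (e.g.\ about the mod-$p$ homology of the fixed-point set), but that congruence is precisely all the paper uses, and your self-contained counting argument is arguably the more transparent justification for the application at hand.
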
 

Now let $\Phi$ be a monotone graph property. Then $\epk\setminus \{\emptyset \}$ is a simplicial complex, called the \emph{graph complex} of $\Phi_k$. The ground set is the set of all edges of the complete labeled graph $K_k$ on $k$ vertices and we emphasize $\epk\setminus \{\emptyset \}$ being a simplicial complex for monotone properties by denoting it as $\Delta(\Phi_k)$. If $\Gamma$ is any permutation group on the set $[k]$ then $\Gamma$ induces a group operation on the ground set of $\Phi_k$, i.e. the edges of the labeled complete graph of size $k$, by relabeling the vertices according to the group element. In particular, $\Delta(\Phi_k)$ is a $\Gamma$-simplicial complex as $\Phi_k$ is invariant under relabeling of vertices. We write~$\Delta^\Gamma(\Phi_k)$ for the fixed-point complex $\Delta(\Phi_k)^\Gamma$ under this operation.

\paragraph*{Parameterized (counting) complexity}
We will follow the definitions of Chapt.~14 of the textbook of Flum and Grohe~\cite{flumgrohe}. A \emph{parameterized counting problem} is a function $F:\{0,1\}^\ast \rightarrow \mathbb{N}$ together with a computable parameterization $\kappa: \{0,1\}^\ast \rightarrow \mathbb{N}$. $(F,\kappa)$ is called \emph{fixed-parameter tractable} (FPT) if there exists a deterministic algorithm $\mathbb{A}$ and a computable function $f$ such that $\mathbb{A}$ computes $F$ in time $f(\kappa(x))\cdot |x|^{O(1)}$ for any input $x$. Given two parameterized counting problems $(F,\kappa)$ and $(F',\kappa')$, a \emph{parameterized Turing reduction} from $(F,\kappa)$ to $(F',\kappa')$ is an FPT algorithm w.r.t. $\kappa$ that has oracle access to $F'$ and that on input $x$ computes $F(x)$ with the additional restriction that there exists a computable function $g$ such that for any oracle query $y$ it holds that $\kappa'(y) \leq g(\kappa(x))$. We write $(F,\kappa) \parared (F',\kappa')$.

The parameterized counting problem $\#\mathsf{Clique}$ aks, given a graph $G$ and a natural number $k$, to compute the number of complete subgraphs of size $k$ in $G$ and the problem is parameterized by $k$. The class $\#\W$ contains all problems $(F,\kappa)$ such that $(F,\kappa) \parared \#\mathsf{Clique}$ holds. Given a recursively enumerable class of graphs $\mathcal{H}$ the problems $\#\homs(\mathcal{H})$, $\#\embs(\mathcal{H})$, $\#\subs(\mathcal{H})$, $\#\strembs(\mathcal{H})$ and $\#\indsubs(\mathcal{H})$ ask, given a graph $H \in \mathcal{H}$ and an arbitrary (unlabeled) graph $G$, to compute $\#\homs(H,G)$, $\#\embs(H,G)$, $\#\subs(H,G)$, $\#\strembs(H,G)$ and $\#\indsubs(H,G)$, respectively. All problems are parameterized by $|H|$. As stated in the introduction, there are dichotomy results for each of the aforementioned problems \cite{homsdicho1,homsdicho2,embsdicho,strembsdicho}. We emphasize on the following, which is crucial for the framework of graph motif parameters.
\begin{theorem}[\cite{homsdicho1,homsdicho2}]\label{thm:homsdicho}
The problem $\#\homs(\mathcal{H})$ is fixed-parameter tractable if there exists $b \in \mathbb{N}$ such that the treewidth\footnote{We remark that the graph parameter of treewidth is not used explicitely in this work. Hence we refer the reader e.g. to Chapt.~11 in~\cite{flumgrohe}.} of every graph in $\mathcal{H}$ is bounded by $b$. Otherwise, the problem is $\#\W$-hard.
\end{theorem}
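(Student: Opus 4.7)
I would prove the two directions separately: the FPT upper bound when the treewidth of $\mathcal{H}$ is bounded by some constant $b$, and $\#\W$-hardness when it is unbounded.

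For the tractable direction, the plan is the classical tree-decomposition dynamic program. First, given $H \in \mathcal{H}$, compute a nice tree decomposition of $H$ of width at most $b$ in time $f(|H|)$. Then process the decomposition bottom-up, maintaining at each node with bag $B$ a table indexed by maps $\varphi : B \to V(G)$ whose entries record the number of extensions of $\varphi$ to a homomorphism from the subgraph of $H$ already processed into $G$. The standard introduce/forget/join updates compute each entry in time polynomial in $n = \#V(G)$, and each table has at most $n^{b+1}$ entries. The value at the root gives $\#\homs(H,G)$, yielding total running time $f(|H|) \cdot n^{O(b)}$, which is FPT.

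For the hardness direction, the plan is to exhibit a parameterized Turing reduction $\#\mathsf{Clique} \parared \#\homs(\mathcal{H})$ under the assumption that the treewidth of $\mathcal{H}$ is unbounded. By the Excluded Grid Theorem, for every $k$ one can find an $H_k \in \mathcal{H}$ containing a $k \times k$ grid as a minor; concretely, $H_k$ has disjoint connected branch sets $B_{i,j}$ indexed by grid vertices, with adjacent grid vertices corresponding to adjacent branch sets. Given a host graph $G$ in which to count cliques of size $k^2$, I would construct an auxiliary $G'$ by replacing each $B_{i,j}$ with a copy of $G$ and wiring adjacent copies along the grid-minor structure, so that homomorphisms from $H_k$ into $G'$ that are constant on every $B_{i,j}$ project to homomorphisms from the $k \times k$ grid into $G$, from which $k^2$-cliques can be read off via standard grid-to-clique gadgetry.

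The main obstacle is that $\#\homs(H_k,G')$ also counts many ``collapsed'' homomorphisms that are not constant on branch sets, and these must be separated out. My plan is to apply M\"obius inversion on the partition lattice of $V(H_k)$ to express the desired number of branch-set-respecting homomorphisms as a linear combination of $\#\homs(H_k / \rho, G_\rho')$ over quotients $\rho$, each obtainable by a single oracle call to $\#\homs(\mathcal{H})$; the number of terms is bounded by $f(|H_k|)$ since $|H_k|$ is controlled by the parameter. The hard part is verifying that no catastrophic cancellation annihilates the clique-counting term. To sidestep this delicate step, I would instead reduce from the partitioned variant $\#\mathsf{MultiColoredClique}$, which is known to be equivalent to $\#\mathsf{Clique}$ under parameterized Turing reductions and for which the relevant coefficient in the inclusion--exclusion is manifestly equal to one. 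Combining this partitioned reduction with the grid-minor construction then yields the required reduction and completes the hardness direction.
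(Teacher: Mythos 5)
The paper does not prove Theorem~\ref{thm:homsdicho} at all: it is imported verbatim from the literature (the counting dichotomy of Dalmau and Jonsson, building on Grohe), so there is no in-paper argument to compare yours against. Your tractable direction is the standard tree-decomposition dynamic program and is fine. Your hardness direction has the right skeleton (Excluded Grid Theorem, branch-set simulation of a grid target, reduction from clique counting), but it contains a genuine gap at exactly the step you flag as delicate.

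The gap is this: you propose to isolate the branch-set-respecting homomorphisms by M\"obius inversion over the partition lattice of $V(H_k)$, with oracle calls to $\#\homs(H_k/\rho,\,G'_\rho)$. But the oracle you are given is for $\#\homs(\mathcal{H})$, and the quotients $H_k/\rho$ are in general \emph{not} members of $\mathcal{H}$, so these queries are simply not available. (M\"obius inversion over the partition lattice is the tool for converting homomorphism counts into embedding counts of the \emph{same} source graph; it does not address which source graphs your oracle accepts, nor is it the right inversion for ``constant on each branch set and separating distinct branch sets.'') Your proposed fix --- switching to $\#\mathsf{MultiColoredClique}$ --- repairs the grid-to-clique encoding at the far end of the reduction but does nothing about the collapsed homomorphisms from $H_k$ into $G'$, which is a phenomenon between the fixed source $H_k$ and the target and is independent of which clique problem you started from. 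The correct resolution, as in Dalmau--Jonsson, keeps the source graph fixed at $H_k\in\mathcal{H}$ throughout and performs the inclusion--exclusion (or polynomial interpolation) entirely on the \emph{target} side: one queries $\#\homs(H_k, G'')$ for a family of modified targets $G''$ (e.g.\ restrictions of $G'$ to subsets of the colour classes attached to the branch sets, or products with small graphs) and extracts the partition-respecting count as an integer linear combination of these values, where the coefficient of the wanted term is $\pm 1$ so no cancellation can occur. A smaller slip: a $k\times k$ grid target built from $G$ encodes $k$-cliques (grid vertex $(i,j)$ carrying a consistent pair of clique vertices), not $k^2$-cliques.
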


In this work we deal with a generalization of $\#\indsubs(\mathcal{H})$. Let $\Phi$ be a computable graph property. The problem $\#\indsubs(\Phi)$ asks, given a graph $G$ and a number $k \in \mathbb{N}$ to compute $\sum_{H \in \Phi_k} \#\indsubs(H,G)$. The parameter is $k$.

\section{Graph motif parameters}\label{sec:hombasis}
In~\cite{hombasis2017} Curticapean, Dell and Marx generalized the problem $\#\homs(\mathcal{H})$ to linear combinations, called \emph{graph motif parameters}. To this end, let $\mathcal{A}$ be a recursively enumerable set of functions $a: \mathcal{G} \rightarrow \mathbb{Q}$ such that $\mathsf{supp}(a)$ is finite. Then the problem $\#\homs(\mathcal{A})$ asks, given $a \in \mathcal{A}$ and a graph $G$, to compute $\sum_{H \in \mathcal{G}} a(H) \cdot \#\homs(H,G)$. The parameter is the description length of $a$, denoted by $|a|$. Their main result states that computing a linear combination of homomorphisms is as hard as computing all terms with non-zero coefficients:

\begin{theorem}[\cite{hombasis2017}]\label{thm:cdm_main}
There exists a deterministic algorithm that, on input a function $a: \mathcal{G} \rightarrow \mathbb{Q}$ with finite support, a graph $F \in \mathsf{supp}(a)$ and a graph $G$ and given oracle access to the function $G \mapsto  \sum_{H \in \mathcal{G}} a(H) \cdot \#\homs(H,G)$, computes $\#\homs(F,G)$ in time $g(|a|)\cdot \#V(G)^{O(1)}$ and additionally satisfies that the number of vertices of every graph $G'$ for which the oracle is queried is of size bounded by $\max_{H \in \mathsf{supp}(a)} \#V(H) \cdot \#V(G)$.
\end{theorem}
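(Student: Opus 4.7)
The plan is a tensor-product interpolation. Write $\mathsf{supp}(a) = \{H_1, \ldots, H_m\}$ with $m \leq |a|$, abbreviate $x_i := \#\homs(H_i, G)$, and denote the oracle function by $T(G') := \sum_i a(H_i) \cdot \#\homs(H_i, G')$. The goal is to extract the specific value $x_F$ from a bounded number of oracle calls. The key is to query $T$ not at $G$ itself but at graphs of the form $G \times J$, where $\times$ denotes the tensor (categorical) product of graphs and $J$ is a carefully chosen small ``gadget.''

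The first ingredient is the standard multiplicativity of homomorphism counts under the tensor product of loop-free simple graphs: for every graph $K$, $\#\homs(K, G \times J) = \#\homs(K, G) \cdot \#\homs(K, J)$, which is immediate from unpacking the definition of $\times$. Hence
\begin{equation}
T(G \times J_r) = \sum_{i=1}^m a(H_i) \cdot \#\homs(H_i, J_r) \cdot x_i.
\end{equation}
For any choice of $m$ gadgets $J_1, \ldots, J_m$, set $M_{r, i} := a(H_i) \cdot \#\homs(H_i, J_r)$ and $y_r := T(G \times J_r)$; then $y = M x$, so once $M$ is invertible we recover every $x_i$ by Gaussian elimination --- in particular $x_F = \#\homs(F, G)$.

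The crux is the \emph{gadget lemma}: we must exhibit $J_1, \ldots, J_m$ each on at most $\max_i \#V(H_i)$ vertices making $M$ non-singular. The plan is to draw the $J_r$ from the set $\mathcal{Q}$ of isomorphism types of simple-graph quotients of the graphs $H_i$; every $Q \in \mathcal{Q}$ is a quotient of some $H_i$, hence has at most $\max_i \#V(H_i)$ vertices. I claim that the rectangular matrix $\left(\#\homs(H_i, Q)\right)_{i \in [m],\, Q \in \mathcal{Q}}$ has full row rank $m$, so we may select $m$ columns forming an invertible square submatrix; rescaling columns by $a(H_i) \neq 0$ keeps $M$ invertible. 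Full row rank reduces to the classical linear independence of homomorphism counts~\cite{lovaszbook}. Explicitly, the inversion $\#\homs(H, G) = \sum_\rho \#\embs(H/\rho, G)$ (sum over partitions of $V(H)$ producing simple quotients) dual to the Möbius identity recalled in the introduction shows that the square matrix $\left(\#\homs(Q, Q')\right)_{Q, Q' \in \mathcal{Q}}$ equals $B \cdot E$, where $B_{Q, Q'} := \#\{\rho : Q/\rho \cong Q'\}$ is unipotent lower-triangular with respect to ordering $\mathcal{Q}$ by $\#V$ increasing, and $E_{Q, Q'} := \#\embs(Q, Q')$ is upper-triangular with diagonal entries $\#\auts(Q) > 0$ with respect to ordering $\mathcal{Q}$ lexicographically by $(\#V, \#E)$ increasing (an embedding between graphs of equal order is a bijection, hence preserves $\#E$ and is an isomorphism). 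Both factors are invertible, and restricting the row index to $\{H_1, \ldots, H_m\} \subseteq \mathcal{Q}$ preserves full row rank.

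The hard part is precisely this gadget lemma together with the size constraint $\#V(J_r) \leq \max_i \#V(H_i)$; the surrounding interpolation is routine. Once it is in place, the algorithm enumerates $\mathcal{Q}$ (depending only on $a$), computes $(\#\homs(H_i, Q))_{i, Q}$, selects invertible gadgets $J_1, \ldots, J_m$ by a rank test, builds the tensor products $G \times J_r$ (each on $\leq \#V(G) \cdot \max_i \#V(H_i)$ vertices, meeting the query-size bound), queries $T$ on each, and solves $y = Mx$ for $x$. All combinatorial preprocessing depends only on $|a|$, so the total running time is $g(|a|) \cdot \#V(G)^{O(1)}$, as required.
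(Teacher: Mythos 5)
This theorem is imported from \cite{hombasis2017} and the paper gives no proof of it, so there is nothing internal to compare against; your reconstruction is correct and is essentially the original Curticapean--Dell--Marx argument (tensor-product queries $G\times J$, multiplicativity of $\#\homs$ under the categorical product, and invertibility of the homomorphism matrix on a quotient-closed family via the factorization into a unipotent quotient-count matrix and a triangular embedding matrix). Both the query-size bound $\#V(G)\cdot\max_i \#V(H_i)$ and the $g(|a|)\cdot\#V(G)^{O(1)}$ running time fall out of your construction exactly as in the cited source, so nothing is missing.
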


Using this result, Curticapean, Dell and Marx proved that the problem $\#\homs(\mathcal{A})$ is fixed-parameter tractable if there is a constant upper bound on the treewidth of all graphs that occur in the support of a function $a \in \mathcal{A}$, and $\#\W$-hard otherwise. After that they showed that all of the problems $\#\embs(\mathcal{H})$, $\#\strembs(\mathcal{H})$, ... are expressible as linear combinations of homomorphisms, immediately implying the existence of dichotomy results for those problems. However, establishing a concrete criterion for fixed-parameter tractability requires to find out which graphs are contained in the support of a function $a$ when the problem is translated to a linear combination of homomorphisms, and this can be highly non-trivial. 

In what follows, we will establish a concrete criterion for properties $\Phi$ such that the coefficient of $K_k$ is non-zero when the function $G \mapsto \sum_{H \in \Phi_k } \#\indsubs(H,G)$ is translated to a linear combination of homomorphisms. This is motivated by the fact that, in this case, Theorem~\ref{thm:cdm_main} allows us to compute the number $\#\homs(K_k,G)$ which is equal to $k!$ times the number of cliques of size $k$ in $G$. As  $\#\mathsf{Clique}$ can not be solved in time $f(k)\cdot \#V(G)^{o(k)}$ for any computable function $f$ under the Exponential Time Hypothesis~\cite{eth1,eth2}, we will not only obtain $\#\W$-hardness but also a tight lower bound under the lense of fine-grained complexity theory.

\begin{theorem}[Theorem~\ref{thm:clique_coef_int} restated]\label{thm:clique_coef_tech} Let $\Phi$ be a graph property, let $k$ be a non-zero natural number and let $a: \mathcal{G} \rightarrow \mathbb{Q}$ be the function such that for all graphs $G$ the following is true
\[\sum_{H \in \Phi_k} \#\indsubs(H,G) = \sum_H a(H) \cdot \#\homs(H,G) \,. \]
Then $|k! \cdot a(K_k)| = |\sum_{A \in \epk} (-1)^{\#A}|$. 
\end{theorem}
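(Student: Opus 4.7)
The plan is to expand $k! \cdot N(G)$, where $N(G) := \sum_{H \in \Phi_k} \#\indsubs(H,G)$, directly into the homomorphism basis, and then read off the coefficient in front of $\#\homs(K_k, G)$. Existence and uniqueness of the function $a$ follow from the linear independence of the maps $G \mapsto \#\homs(H, G)$ over pairwise non-isomorphic $H$ that underpins the CDM framework, so the whole argument reduces to a bookkeeping computation of this one coefficient.

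First, observe that $N(G)$ equals the number of $k$-subsets $A \subseteq V(G)$ with $\Phi(G[A]) = 1$, since the terms $\#\indsubs(H,G)$ simply partition these subsets by isomorphism type of $G[A]$. Multiplying by $k!$ and grouping injections $\varphi: [k] \hookrightarrow V(G)$ by the pullback edge pattern $B_\varphi := \{\{i,j\} \in \binom{[k]}{2} : \{\varphi(i),\varphi(j)\} \in E(G)\}$ gives
$$k! \cdot N(G) \;=\; \sum_{B \in \epk} \#\strembs(([k], B), G).$$
Next I would apply Möbius inversion in the Boolean lattice of edge sets on the labeled $K_k$, namely $\#\strembs(([k], B), G) = \sum_{B \subseteq C \subseteq E(K_k)} (-1)^{\#C - \#B} \#\embs(([k], C), G)$, and swap the order of summation to arrive at
$$k! \cdot N(G) \;=\; \sum_{C \subseteq E(K_k)} \alpha_C \cdot \#\embs(([k], C), G), \qquad \alpha_C \;:=\; \sum_{\substack{B \in \epk \\ B \subseteq C}} (-1)^{\#C - \#B}.$$

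Finally, I substitute the Lovász partition-lattice identity $\#\embs(F, G) = \sum_\rho \mu(\emptyset, \rho) \#\homs(F/\rho, G)$ recalled in the introduction into each summand. The crucial observation is that for any $k$-vertex graph $F$ the quotient $F/\rho$ has strictly fewer than $k$ vertices as soon as $\rho \neq \emptyset$, so the unique pair $(C, \rho) = (E(K_k), \emptyset)$ contributes a $\#\homs(K_k, G)$ term, with coefficient $\alpha_{E(K_k)} \cdot \mu(\emptyset, \emptyset) = \alpha_{E(K_k)}$. Uniqueness of the hom-basis representation then yields $k! \cdot a(K_k) = \alpha_{E(K_k)} = (-1)^{\binom{k}{2}} \sum_{B \in \epk} (-1)^{\#B}$, and the claim follows by taking absolute values. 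The main obstacle is really just the sign and indexing bookkeeping in the middle step; all of the deeper machinery (linear independence of homs, CDM translation, Möbius inversion on partitions) is already available from Section~\ref{sec:hombasis} and the preliminaries.
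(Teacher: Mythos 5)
Your proposal is correct and follows essentially the same route as the paper: the chain $\indsubs \to \strembs \to \embs \to \homs$ via inclusion--exclusion over edge-supersets and M\"obius inversion over the partition lattice, with the observation that only the pair (complete edge set, trivial partition) contributes to the coefficient of $\#\homs(K_k,G)$. The only difference is organizational: by multiplying by $k!$ at the outset and working with labeled edge-subsets $B \in \epk$ throughout, you avoid the paper's final step of converting $k!/\#\auts(H)$ into a count of labeled copies via the Orbit--Stabilizer theorem.
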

\begin{proof}
Using the principle of inclusion-exclusion we can express the number of strong embeddings as the number of embeddings (see e.g. Chapt.~5.2.3 in~\cite{lovaszbook}):
\begin{equation}\label{eq:mc_1}
\#\strembs(H,G) = \sum_{\substack{H' \supseteq H\\V(H)=V(H')}} (-1)^{\#E(H')-\#E(H)} \cdot \#\embs(H',G) \,,
\end{equation}
where $H'$ ranges over all graphs obtained from $H$ by adding edges. Next we collect terms in (\ref{eq:mc_1}) that correspond to isomorphic graphs. To this end we let $\#\lbrace H' \supseteq H \rbrace$ denote the number of possibilities to add edges to $H$ such that the resulting graph is isomorphic to $H'$. Note that $\#\lbrace K_k \supseteq H \rbrace = 1$ if $H$ has $k$ vertices. We obtain
\begin{equation}\label{eq:mc_2}
\#\strembs(H,G) = \sum_{H' \in \mathcal{G}} (-1)^{\#E(H')-\#E(H)} \cdot \#\lbrace H' \supseteq H \rbrace \cdot \#\embs(H',G) \,.
\end{equation}
Next we translate the number of embeddings to a linear combination of homomorphisms. This can be done using Möbius inversion\footnote{We omit the formal introduction to Möbius inversion as we will only need that $\mu(\emptyset,\emptyset)=1$. We refer the interested reader to~\cite{lovaszbook}, where the concept is introduced and Equation~(\ref{eq:mc_3}) is proved.} (see~\cite{hombasis2017} or Chapt.~5.2.3 in~\cite{lovaszbook}):
\begin{equation}\label{eq:mc_3}
\#\embs(H',G) = \sum_{\rho \geq \emptyset} \mu(\emptyset,\rho)\cdot \#\homs(H'/\rho,G)\,,
\end{equation}
where the sum and the Möbius function $\mu$ are over the partition lattice of $V(H')$ and $H'/\rho$ is obtained from $H'$ by contracting every pair of vertices that is contained in the same block in~$\rho$. We observe that the coefficient of $\#\homs(K_k,G)$ in the above sum is $\mu(\emptyset,\emptyset)=1$ if $H'$ is isomorphic to $K_k$ and zero otherwise as every vertex contraction of a graph with $k$ vertices that is not the complete graph can not result in the complete graph with $k$ vertices. Hence the coefficient of $\#\homs(K_k,G)$ in Equation~(\ref{eq:mc_2}) is precisely $(-1)^{\#E(K_k)-\#E(H)}$. Next we use Fact~\ref{fact:auto} and obtain that
\begin{equation}\label{eq:mc_4}
\sum_{H \in \Phi_k} \#\indsubs(H,G) = \sum_{H \in \Phi_k} \#\strembs(H,G)\cdot \#\auts(H)^{-1}\,.
\end{equation}
It follows that the coefficient $a(K_k)$ of $\#\homs(K_k,G)$ in Equation~(\ref{eq:mc_4}) satisfies
\begin{equation}\label{eq:mc_5}
a(K_k) = \sum_{H \in \Phi_k} (-1)^{\#E(K_k)-\#E(H)}\cdot \#\auts(H)^{-1}\,.
\end{equation}
We now multiply this equation by $k!$, which we interpret as the number $\#\Symgp_k$ of elements of the symmetric group of the $k$ vertices. Taking also the absolute value on both sides allows us to drop the constant factor $(-1)^{\#E(K_k)}$ and we obtain
\begin{equation}\label{eq:mc_6}
|k! \cdot a(K_k)| = \left\lvert\sum_{H \in \Phi_k} (-1)^{\#E(H)}\cdot \frac{\#\Symgp_k}{\#\auts(H)}\right\rvert\,.
\end{equation}
For any graph $H$ in the above sum choose a set $A_0$ of edges of the labeled complete graph~$K_k$ on $k$ vertices such that the corresponding subgraph $G(A_0)$ is isomorphic to $H$. The group $\Symgp_k$ acts on the vertices and thus on the edges of $K_k$ and by the definition of a graph automorphism, the stabilizer of the set $A_0$ has exactly $\#\auts(H)$ elements. On the other hand the orbit of $A_0$ under $\Symgp_k$ is the collection of all sets $A$ such that $G(A) \cong H$. By the Orbit Stabilizer theorem we have $\frac{\#\Symgp_k}{\#\auts(H)} = \# \{A \subseteq E(K_k) ~|~ G(A) \cong H\}$.
Inserting in equation (\ref{eq:mc_6}) we obtain
\begin{equation}\label{eq:mc_8}
|k! \cdot a(K_k)| = \left\lvert\sum_{H \in \Phi_k} \sum_{\substack{A \subseteq E(K_k) \\ G(A) \cong H}} (-1)^{\#E(H)}\right\rvert = \left\lvert\sum_{A \in \epk} (-1)^{\#A}\right\rvert\,.
\end{equation}
\end{proof}
Theorem~\ref{thm:clique_coef_tech} implies the following sufficient criterion for hardness of $\#\indsubs(\Phi)$ which we will use in the remainder of the paper. 

\begin{corollary}\label{cor:criterion}
Let $\Phi$ be a graph property and let $\mathcal{K}=\{k \in \mathbb{N}~|~\sum_{A \in \epk} (-1)^{\#A} \neq 0\}$. If $\mathcal{K}$ is infinite, then $\#\indsubs(\Phi)$ is $\#\W$-hard. If additionally $\mathcal{K}$ is \dense, $\#\indsubs(\Phi)$ can not be solved in time $f(k)\cdot \#V(G)^{o(k)}$ for any computable function $f$, unless ETH fails.
\end{corollary}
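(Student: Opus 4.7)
The plan is to combine Theorem~\ref{thm:clique_coef_tech} with Theorem~\ref{thm:cdm_main} to obtain a parameterized Turing reduction $\#\mathsf{Clique} \parared \#\indsubs(\Phi)$. For each $k \in \mathcal{K}$, Theorem~\ref{thm:clique_coef_tech} guarantees that when the function $G^\ast \mapsto \sum_{H \in \Phi_k} \#\indsubs(H,G^\ast)$ is written as a linear combination $\sum_H a(H)\cdot \#\homs(H,G^\ast)$, the coefficient $a(K_k)$ is non-zero, so $K_k \in \mathsf{supp}(a)$. Applying Theorem~\ref{thm:cdm_main} with $F = K_k$ then yields an FPT-time algorithm that computes $\#\homs(K_k, G^\ast)$ for arbitrary $G^\ast$, using oracle calls to $\#\indsubs(\Phi)$ at parameter~$k$ on graphs with at most $k\cdot\#V(G^\ast)$ vertices; dividing by $k!$ recovers $\#\mathsf{Clique}(G^\ast, k)$.

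To bootstrap this into a reduction starting from $\#\mathsf{Clique}$ at an arbitrary parameter, I would note that $\mathcal{K}$ is decidable (membership reduces to a finite computation once $\Phi$ is computable). Given target parameter~$k$, select $k' \in \mathcal{K}$ with $k' \geq k$ and reduce $\#\mathsf{Clique}(G,k)$ to computing $\#\mathsf{Clique}(\cdot, k')$ on several padded graphs. Concretely, for $j=0,\dots,k'-k$, construct $G_j$ by adjoining to $G$ a set of $j$ new vertices that are pairwise adjacent and adjacent to every vertex of~$G$. A case split on how many new vertices a $k'$-clique in $G_j$ uses gives
\[ \#\mathsf{Clique}(G_j, k') = \sum_{i=0}^{j}\binom{j}{i}\cdot\#\mathsf{Clique}(G, k'-i), \]
a lower-triangular system in the unknowns $\#\mathsf{Clique}(G, k'-i)$ with unit diagonal, and inverting it in $O(k)$ steps extracts the desired value $\#\mathsf{Clique}(G,k)$. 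Composing this padding with the CDM-based reduction above produces the claimed $\#\mathsf{Clique} \parared \#\indsubs(\Phi)$ and hence $\#\W$-hardness.

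For the ETH refinement under the density hypothesis, given $k$ I would choose $k' \in \mathcal{K}$ with $k\leq k' \leq c\cdot k$, so every oracle query has parameter $O(k)$ and is posed on a graph with $O(k\cdot n)$ vertices. An alleged algorithm for $\#\indsubs(\Phi)$ running in $f(k)\cdot n^{o(k)}$ time would, after composition with the algorithm of Theorem~\ref{thm:cdm_main} and the padding inversion, yield an $f'(k)\cdot n^{o(k)}$ algorithm for $\#\mathsf{Clique}$, contradicting the ETH lower bound of~\cite{eth1,eth2}. The delicate point---and the main obstacle I anticipate---is the bookkeeping of these two sources of blowup: the universal-vertex padding enlarges the parameter from $k$ to $k'$ and the graph by an additive $O(k)$, while Theorem~\ref{thm:cdm_main} queries its oracle on graphs of size up to $k'$ times the input size. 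The density assumption $k' \leq c\cdot k$ is exactly what keeps the composite blowup linear in $k$, so that $n^{o(k')}$ equals $n^{o(k)}$ and the ETH lower bound transfers intact; without density one still obtains $\#\W$-hardness but loses the tight exponent.
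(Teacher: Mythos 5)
Your proposal is correct, and its core --- nonzero $a(K_{k'})$ from Theorem~\ref{thm:clique_coef_tech} feeding the extraction algorithm of Theorem~\ref{thm:cdm_main} --- is exactly the paper's argument. You diverge in how you bootstrap from parameters in $\mathcal{K}$ to arbitrary parameters. For plain $\#\W$-hardness the paper needs no padding at all: it reduces directly from $\#\homs(\{K_{k}\mid k\in\mathcal{K}\})$, which is already $\#\W$-hard by Theorem~\ref{thm:homsdicho} since this family has unbounded treewidth; your universal-vertex padding plus the lower-triangular binomial system is a correct but heavier route to the same conclusion. For the ETH bound the paper starts from $\#\mathsf{ColClique}$, pads with $k'-k$ universal vertices carrying fresh colors so that colorful $k'$-cliques in the padded graph correspond bijectively to colorful $k$-cliques in $G$, and recovers colorful counts from the uncolored clique counts supplied by Theorem~\ref{thm:cdm_main} via inclusion-exclusion over color classes; you instead stay with uncolored $\#\mathsf{Clique}$ (whose $f(k)\cdot n^{o(k)}$ ETH lower bound the paper also invokes in Section~\ref{sec:hombasis}) and invert your triangular system over $j=0,\dots,k'-k$. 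Both recoveries cost only $O(k)$ extra oracle-backed computations, and your accounting of the two blow-up sources (parameter $k'\le ck$, query size $O(k'\cdot n)$) matches the paper's. One point worth making explicit, which you only gesture at: the minimal $k'\in\mathcal{K}$ with $k'\ge k$ must be \emph{computable} from $k$ (decide membership of each candidate by brute force over the $2^{\binom{k'}{2}}$ edge subsets and search upward; infiniteness of $\mathcal{K}$ guarantees termination), since the definition of a parameterized Turing reduction requires the oracle parameters to be bounded by a computable function $g(k)$.
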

\begin{proof}
Theorem~\ref{thm:clique_coef_tech} and Theorem~\ref{thm:cdm_main} induce a parameterized Turing reduction from the problem $\#\homs(\{K_k ~|~k \in \mathcal{K} \})$ which is known to be $\#\W$-hard by Theorem~\ref{thm:homsdicho}. While this implies the first statement, we explicitely use a reduction from the problem $\#\mathsf{ColClique}$ to prove the latter. $\#\mathsf{ColClique}$ asks, given $k \in \mathbb{N}$ and a $k$-vertex-colored graph $G$, to compute the number of cliques of size $k$ in $G$ that are colorful, i.e. exactly one vertex of each color is contained in the clique. It is known that this problem can not be solved in time $f(k)\cdot \#V(G)^{o(k)}$ for any computable function $f$ unless ETH fails (see e.g. Chapt.~1.2.3 in~\cite{thesis_radu}). Before we proceed with the reduction, we recall that $\mathcal{K}$ being dense implies that there are constants $c$ and $b$ such that for all $k \in \mathbb{N}$ with $k>b$ there exists $k' \in \mathcal{K}$ such that $k\leq k' \leq ck$.
 
Now given an instance $(G,k)$ of $\#\mathsf{ColClique}$ we proceed as follows. If $k \leq b$ we solve the problem by brute-force which requires time $O(n^b)$. Otherwise we search for the minimal number $k' \in \mathcal{K}$ such that $k \leq k' \leq ck$. Next we construct the graph $G'$ from $G$ by adding $k'-k$ vertices $v_{k+1},\dots,v_k'$ and color them with new colors $k+1,\dots,k'$. After that we add edges $\{v_i,u\}$ for all $i\in \{k+1,\dots,k'\}$ and $u \in V(G)$. Now it can easily be verified that the number of colorful $k'$-cliques in $G'$ equals the number of colorful $k$-cliques in $G$. Theorem~\ref{thm:clique_coef_tech} implies that for every graph $G$ the coefficient $a(K_{k'})$ of $\#\homs(K_{k'},G)$ is non-zero if $\sum_{H \in \Phi_{k'}} \#\indsubs(H,G)$ is expressed as a linear combination of homomorphisms and Theorem~\ref{thm:cdm_main} hence allows us to compute $\#\homs(K_{k'},G'')$ for every subgraph $G''$ of $G'$ in FPT time if access to $\#\indsubs(\Phi)$ is provided. Dividing by $k'!$ yields the number of (uncolored) $k'$-cliques in $G''$. Finally, we use the principle of inclusion-exclusion to compute the number of colorful $k'$-cliques in $G'$. (see e.g. Chapt.~1.4.1 in~\cite{thesis_radu}). 

As all oracle calls satisfy that the parameter ($k'$) is bounded by $c\cdot k$ for a constant $c$ and that the size of the queried graph is bounded by $g(k)\cdot \#V(G)$ for some computable function $g$, and the overall reduction runs in FPT time, it holds that any algorithm that solves $\#\indsubs(\Phi)$ in time $f'(k)\cdot \#V(G)^{o(k)}$ for some computable function $f'$ can be used to solve $\#\mathsf{ColClique}$ in time $f''(k)\cdot \#V(G)^{o(k)}$ for some computable function $f''$, which is impossible unless ETH fails.
\end{proof}

\section{Monotone properties}\label{sec:mono}
Recall that monotone graph properties are closed under the removal of edges. In what follows we assume every monotone graph property to hold on the independent set, i.e., the graph containing no edges, because otherwise the property would be trivially false. For technical reasons we say that a property is \emph{non-trivial} if it is false on $K_k$ for all but finitely many $k \in \mathbb{N}$.\footnote{This is required to exlude properties like $\Phi(G) = 0 \Leftrightarrow \#V(G) \equiv 1 \mod 2$ which indeed is monotone as it is closed under the removal of edges.} We start by refining Theorem~\ref{thm:clique_coef_tech} for monotone properties.

\begin{lemma}\label{lem:clique_coef_int_mono_tech}
Let $\Phi$ be a monotone graph property and let $k$ be a non-zero natural number. Then it holds that $\sum_{A \in \epk} (-1)^{\#A} = \hat{\chi}(\Delta(\Phi_k))$.
\end{lemma}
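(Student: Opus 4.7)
The plan is to verify the identity essentially by unfolding both sides and invoking Fact~\ref{fact:euler_red_euler}. Since we assume monotone properties to hold on the edge-free graph, the empty edge set lies in $\epk$; together with monotonicity this identifies $\epk$ with $\Delta(\Phi_k)\cup\{\emptyset\}$, at which point the reduced Euler characteristic appears almost tautologically.

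More concretely, I would first observe that because $\Phi$ is monotone and holds on the independent set (the graph whose edge set is $\emptyset$), we have $\emptyset \in \epk$. Moreover, the definition of $\Delta(\Phi_k)$ as a simplicial complex is exactly $\epk \setminus \{\emptyset\}$, so
\begin{equation*}
\epk \;=\; \Delta(\Phi_k) \cup \{\emptyset\}.
\end{equation*}

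Next I would rewrite the left-hand side by grouping edge sets of the same cardinality:
\begin{equation*}
\sum_{A \in \epk} (-1)^{\#A} \;=\; \sum_{i \geq 0} (-1)^i \cdot \#\{A \in \epk \mid \#A = i\} \;=\; \sum_{i \geq 0} (-1)^i \cdot \#\{A \in \Delta(\Phi_k) \cup \{\emptyset\} \mid \#A = i\}.
\end{equation*}
By Fact~\ref{fact:euler_red_euler} the right-hand expression is precisely $\hat{\chi}(\Delta(\Phi_k))$, completing the argument.

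There is essentially no obstacle here: the only subtle point is the bookkeeping of the empty set, which on the one hand is always an element of $\epk$ when $\Phi$ holds on the independent set, and on the other hand is formally excluded from the simplicial complex $\Delta(\Phi_k)$ but reintroduced in the Fact's formulation of the reduced Euler characteristic. Once this is noted, the computation is a one-line reindexing.
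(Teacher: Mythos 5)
Your proof is correct and follows essentially the same route as the paper: both hinge on the observation that $\emptyset\in\epk$ because the monotone property holds on the independent set, and then identify the alternating sum over $\epk$ with the reduced Euler characteristic via a cardinality-versus-dimension reindexing. The only cosmetic difference is that you invoke Fact~\ref{fact:euler_red_euler} directly, whereas the paper re-derives that identity inline by peeling off the empty-set term and shifting the index.
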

\begin{proof}
By Theorem~\ref{thm:clique_coef_tech} it holds that $|k! \cdot a(K_k)| = |\sum_{A \in \epk} (-1)^{\#A}|$. The claim follows as
\begin{align}
\sum_{A \in \epk} (-1)^{\#A} &= (-1)^{\#\emptyset} + \sum_{i\geq 1} (-1)^i \cdot \#\{A \in \epk \setminus \{\emptyset \} ~|~\#A =i \}\label{eq:is_is_true}\\
~&= 1 + \sum_{i\geq 0} (-1)^{i+1} \cdot \#\{A \in \epk \setminus \{\emptyset \} ~|~\#A =i+1 \}\\
~&= 1 - \sum_{i\geq 0} (-1)^i \cdot \#\{A \in \Delta(\Phi_k) ~|~\mathsf{dim}(A) =i \}\\
~&= 1 - \chi(\Delta(\Phi_k)) = \hat{\chi}(\Delta(\Phi_k))\,.
\end{align}
Note that (\ref{eq:is_is_true}) holds because $\Phi_k$ is true for the independent set.
\end{proof}

\begin{corollary}[Corollary~\ref{cor:eul_non_zero_hard_int} restated] \label{cor:criterion_mono}
Let $\Phi$ be a monotone graph property and let \[\mathcal{K}=\{k \in \mathbb{N}~|~\hat{\chi}(\Delta(\Phi_k)) \neq 0\} \,. \] If $\mathcal{K}$ is infinite, then $\#\indsubs(\Phi)$ is $\#\W$-hard. If additionally $\mathcal{K}$ is \dense, $\#\indsubs(\Phi)$ can not be solved in time $f(k)\cdot \#V(G)^{o(k)}$ for any computable function $f$, unless ETH fails.
\end{corollary}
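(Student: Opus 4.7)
The plan is very short, because this corollary is essentially a repackaging of two results that are already in hand. The key observation is that Lemma~\ref{lem:clique_coef_int_mono_tech} gives, for every monotone property $\Phi$ and every positive integer $k$, the identity
\[ \sum_{A \in \epk} (-1)^{\#A} \;=\; \hat{\chi}(\Delta(\Phi_k))\,. \]
Consequently, the set $\mathcal{K} = \{k \in \mathbb{N} \mid \hat{\chi}(\Delta(\Phi_k)) \neq 0\}$ appearing in the statement coincides verbatim with the set $\{k \in \mathbb{N} \mid \sum_{A \in \epk} (-1)^{\#A} \neq 0\}$ which drives the hypothesis of Corollary~\ref{cor:criterion}.

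With this identification in place, both conclusions of Corollary~\ref{cor:criterion_mono} follow by direct invocation of Corollary~\ref{cor:criterion}. If $\mathcal{K}$ is infinite, the first part of Corollary~\ref{cor:criterion} delivers $\#\W$-hardness of $\#\indsubs(\Phi)$; if $\mathcal{K}$ is in addition dense, the second part yields the matching ETH lower bound $f(k) \cdot \#V(G)^{o(k)}$ for every computable $f$.

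There is no genuine obstacle to overcome in this step: all of the real content lives either in Theorem~\ref{thm:clique_coef_tech} (which identifies the coefficient of $\#\homs(K_k,G)$ with the alternating edge sum), in Lemma~\ref{lem:clique_coef_int_mono_tech} (which identifies that alternating sum with $\hat{\chi}(\Delta(\Phi_k))$ in the monotone case), or in Corollary~\ref{cor:criterion} (whose proof carries out the parameterized Turing reduction from $\#\mathsf{Clique}$ and $\#\mathsf{ColClique}$ via Theorem~\ref{thm:cdm_main}). The only function of Corollary~\ref{cor:criterion_mono} is to restate the sufficient condition in the topologically natural language of reduced Euler characteristics of graph complexes, which is what the forthcoming sections will exploit through Smith's theorem (Theorem~\ref{thm:smith}) in order to prove Theorem~\ref{thm:main}.
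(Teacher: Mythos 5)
Your proposal is correct and matches the paper's own argument exactly: the paper also derives this corollary immediately by combining Lemma~\ref{lem:clique_coef_int_mono_tech} with Corollary~\ref{cor:criterion}. Nothing further is needed.
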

\begin{proof}
Follows immediately from Lemma~\ref{lem:clique_coef_int_mono_tech} and Corollary~\ref{cor:criterion}.
\end{proof}
The above criterion yields hardness of $\#\indsubs(\Phi)$ for every monotone graph property $\Phi$ whose graph complex is well-understood with respect to the (reduced) Euler characteristic. The thesis of Jonsson (see Chapt.~10.5 in~\cite{jonsson}) provides a large list of graph complexes including e.g. disconnectivity, colorability and coverability, only to name a few, whose reduced Euler characteristics are non-zero infinitely often and hence to which Corollary~\ref{cor:criterion_mono} is applicable. We would also like to point out the work of Chakrabarti, Khot and Shi~\cite{chakrabarti_evasiveness} who proved the reduced Euler characteristic of a large family of graph complexes to be odd. Their result will be used to prove the fourth condition of Theorem~\ref{thm:main} and reads as follows --- we state it in terms of homomorphisms.

\begin{lemma}[\cite{chakrabarti_evasiveness}]\label{lem:chakrabarti} Let $F$ be a graph and let $\Phi^{[F]}$ be the graph property that holds true on a graph $G$ if and only if $\homs(F,G) = \emptyset$, i.e., there is no homomorphism from $F$ to $G$. Furthermore let $T_F := \min \{2^{2^t}-1~|~2^{2^t} \geq \#V(F)\}$ and let $k \in \mathbb{N}$ such that $k \equiv 1 \mod T_F$. Then it holds that $\chi(\Phi_k^{[F]}) \equiv 0 \mod 2$ and hence $\hat{\chi}(\Phi_k^{[F]}) \equiv 1 \mod 2$.
\end{lemma}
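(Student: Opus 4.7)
The plan is to apply Smith's theorem (Theorem~\ref{thm:smith}) with prime $p=2$ to the simplicial complex $\Delta(\Phi_k^{[F]})$. The property $\Phi^{[F]}$ is monotone: any homomorphism $\varphi\colon F \to G'$ into a subgraph $G' \subseteq G$ is also a homomorphism $F \to G$, so absence of a homomorphism is preserved under edge deletion. Hence $\Delta(\Phi_k^{[F]})$ really is a simplicial complex, and it is invariant under every relabeling of $[k]$. It therefore suffices to exhibit a $2$-group $\Gamma$ acting on $[k]$ for which $\chi\bigl(\Delta^{\Gamma}(\Phi_k^{[F]})\bigr) \equiv 0 \pmod 2$; Theorem~\ref{thm:smith} then yields $\chi(\Delta(\Phi_k^{[F]})) \equiv 0 \pmod 2$ and consequently $\hat{\chi}(\Delta(\Phi_k^{[F]})) \equiv 1 \pmod 2$, as claimed.

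Set $s = 2^t$, so that $T_F = 2^s - 1$ and $2^s \geq \#V(F)$, and write $k = 1 + m \cdot T_F$. First I would fix one vertex $v_0 \in [k]$, partition the remaining $k - 1$ vertices into $m$ blocks $X_1,\dots,X_m$ of size $T_F$, and identify each $X_i$ with $\mathbb{F}_2^s \setminus \{0\}$. Let $U \leq \mathrm{GL}(s, \mathbb{F}_2)$ be the Sylow $2$-subgroup of upper unitriangular matrices, of order $2^{\binom{s}{2}}$. I take $\Gamma$ to be the direct product of $m$ copies of $U$, the $i$-th factor acting linearly on $X_i$ and trivially elsewhere. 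Then $\Gamma$ is a $2$-group that evidently preserves $\Delta(\Phi_k^{[F]})$.

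The crux is to show that $\Delta^{\Gamma}(\Phi_k^{[F]})$ is in fact empty, giving $\chi = 0$ directly. Equivalently: every non-empty $\Gamma$-invariant edge set $A \subseteq E(K_k)$ produces a graph $G(A)$ admitting a homomorphism from $F$. The plan is to reduce to a single $\Gamma$-orbit of pairs (the smallest possible non-empty invariant choice), classify such orbits by endpoint type --- pairs joining $v_0$ to a $U$-orbit in some $X_i$, pairs inside one $X_i$, and pairs between distinct $X_i$ and $X_j$ --- and observe that in each case the resulting subgraph spans at least $2^s \geq \#V(F)$ vertices arranged in a structured block that accommodates a homomorphic image of $F$. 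The numerical coincidence $T_F = 2^s - 1$ enters precisely because the orbit sizes of $U$ on $\mathbb{F}_2^s \setminus \{0\}$ are $1, 2, 4, \dots, 2^{s-1}$, exhausting all $T_F$ non-zero vectors and thereby producing the combinatorial room needed for the embedding.

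The hard part will be exactly this last step: giving an airtight case analysis showing that each $\Gamma$-orbit of pairs, when realized in isolation, already carries a homomorphism from $F$. This rests on the explicit orbit structure of the non-transitive upper-unitriangular action on $\mathbb{F}_2^s$, and on exploiting the bound $2^s \geq \#V(F)$; if a direct single-orbit argument fails for some orbit type, a fallback would be to enlarge $\Gamma$ (e.g.\ by adjoining a further $2$-group of permutations of the blocks $X_i$, or by replacing $U$ with a larger Sylow $2$-subgroup of an affine group) so as to force sufficient richness in every non-empty invariant subgraph. Once the fixed-point complex is pinned down, the remaining ingredients --- monotonicity of $\Phi^{[F]}$, $\Gamma$-invariance of the complex, and the mod-$2$ passage via Theorem~\ref{thm:smith} --- are essentially bookkeeping.
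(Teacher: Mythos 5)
The paper does not actually prove this lemma; it imports it (restated in terms of homomorphisms) from Chakrabarti, Khot and Shi, so your attempt has to be measured against that external argument. Your framing---monotonicity of $\Phi^{[F]}$, invariance of the complex under relabelings, and an application of Theorem~\ref{thm:smith} with $p=2$ to a suitable $2$-group---is the right family of ideas. But the step you yourself identify as the crux, namely that $\Delta^{\Gamma}(\Phi_k^{[F]})$ is \emph{empty} for $\Gamma=U^m$, is false, and no repair within your framework can make it true. The upper unitriangular group $U$ fixes the first standard basis vector $e_1$ of $\mathbb{F}_2^s$ (the first column of a strictly upper triangular matrix vanishes), so the edge joining $v_0$ to the copy of $e_1$ in $X_1$ is fixed by all of $\Gamma$ and forms a singleton orbit; the graph consisting of exactly this one edge is a non-empty $\Gamma$-invariant graph with no homomorphism from any non-bipartite $F$, hence this orbit is a vertex of $\Delta^{\Gamma}(\Phi_k^{[F]})$, which is therefore non-empty (and in fact quite large, since many unions of orbits yield bipartite graphs). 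This is not an artifact of choosing $U$: since $\#(\mathbb{F}_2^s\setminus\{0\})=T_F$ is odd, \emph{every} $2$-group acting on a block has a fixed vertex there, and more generally every $2$-group acting on $E(K_k)$ fixes at least one edge whenever $\binom{k}{2}$ is odd, because the number of fixed edges is congruent to $\binom{k}{2}$ modulo $2$. As $k\equiv 1\pmod{T_F}$ includes infinitely many $k$ with $\binom{k}{2}$ odd (e.g.\ $F=K_3$, $T_F=3$, $k=7$), the strategy ``exhibit a $2$-group whose fixed-point complex is empty'' cannot prove the lemma; the correct argument must evaluate the Euler characteristic of a genuinely non-empty fixed-point complex, which is precisely the delicate computation your sketch defers and then cannot deliver. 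Enlarging $\Gamma$, as your fallback suggests, does not escape this parity obstruction.

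A second warning sign is that your construction never uses the doubly exponential shape of $T_F=2^{2^t}-1$: it would apply verbatim to $T_F=2^s-1$ for arbitrary $s$ with $2^s\ge\#V(F)$. The specific form $q=2^{2^t}$ is needed in the original proof precisely so that $\mathrm{Gal}(\mathbb{F}_q/\mathbb{F}_2)$, of order $2^t$, is itself a $2$-group and can be combined with the field structure of $\mathbb{F}_q$; that mechanism is absent from your sketch. The surrounding bookkeeping (monotonicity of $\Phi^{[F]}$, $\Gamma$-invariance, reducing ``admits a homomorphism from $F$'' to ``contains $K_{\#V(F)}$'') is correct, but the heart of the proof is missing and, as announced, unachievable.
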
 

Unfortunately, as the proof of the above lemma shows, it is often quite tedious to argue about the (reduced) Euler characteristic of the graph complex induced by a more complicated property $\Phi$ and hence proving hardness of $\#\indsubs(\Phi)$. In the remainder of this section we will therefore demonstrate that Corollary~\ref{cor:criterion_mono} together with Theorem~\ref{thm:smith} yields a fruitful topological approach to prove $\#\W$-hardness and conditional lower bounds for $\#\indsubs(\Phi)$, given that $\Phi$ is a monotone graph property. We outline the approach in the following lemma. 

\begin{lemma}\label{lem:top_approach}
Let $\Phi$ be a monotone graph property, let $\mathcal{K}$ be an infinite subset of $\mathbb{N}$ and let $\Gamma = \{\Gamma_k~|~ k \in \mathcal{K}\}$ be a set of permutation groups such that for every $k \in \mathcal{K}$ the group $\Gamma_k$ is a $p_k$-power group for some prime $p_k$. If for every $k \in \mathcal{K}$ it holds that
\[ \hat{\chi}(\Delta^\Gamma(\Phi_k))  \not\equiv 0 \mod p_k  \,,\]
then $\#\indsubs(\Phi)$ is $\#\W$-hard. If additionally $\mathcal{K}$ is \dense, $\#\indsubs(\Phi)$ can not be solved in time $f(k)\cdot \#V(G)^{o(k)}$ for any computable function $f$, unless ETH fails.
\end{lemma}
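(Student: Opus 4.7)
The plan is to combine Smith's fixed-point theorem (Theorem~\ref{thm:smith}) with the hardness criterion for monotone properties (Corollary~\ref{cor:criterion_mono}); the topological hypothesis on the fixed-point complex will feed directly into the Euler characteristic criterion via a congruence mod $p_k$.

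Concretely, fix any $k \in \mathcal{K}$. First I would note that $\Delta(\Phi_k)$ is a $\Gamma_k$-simplicial complex: the permutation group $\Gamma_k$ acts on $[k]$ and hence on $E(K_k)$ by relabeling, and since $\Phi$ is a graph property (invariant under isomorphism), membership in $\epk \setminus \{\emptyset\} = \Delta(\Phi_k)$ is preserved by this action. This is exactly the setup described at the end of the ``transformation groups'' paragraph in Section~\ref{sec:prelims}, so no additional verification is needed beyond citing it.

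Next, since $\Gamma_k$ is a $p_k$-power group, Theorem~\ref{thm:smith} applies and gives
\begin{equation}
\hat{\chi}(\Delta(\Phi_k)) \;\equiv\; \hat{\chi}(\Delta^{\Gamma_k}(\Phi_k)) \pmod{p_k}.
\end{equation}
By the hypothesis of the lemma, the right-hand side is not $\equiv 0 \pmod{p_k}$, and therefore the left-hand side is likewise nonzero modulo $p_k$. In particular $\hat{\chi}(\Delta(\Phi_k)) \neq 0$ as an integer. Since this holds for every $k \in \mathcal{K}$, the set $\mathcal{K}' := \{k \in \mathbb{N} : \hat{\chi}(\Delta(\Phi_k)) \neq 0\}$ contains $\mathcal{K}$, so $\mathcal{K}'$ is infinite, and \dense\ whenever $\mathcal{K}$ is.

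Finally I would invoke Corollary~\ref{cor:criterion_mono}: infiniteness of $\mathcal{K}'$ yields $\#\W$-hardness of $\#\indsubs(\Phi)$, and denseness of $\mathcal{K}'$ upgrades this to the conditional lower bound $f(k)\cdot \#V(G)^{o(k)}$ under ETH. There is no real obstacle in this argument; it is a clean two-line deduction once one observes that the assumption is precisely strong enough to force the mod-$p_k$ reduction to be nonzero, which in turn forces the integer $\hat{\chi}(\Delta(\Phi_k))$ to be nonzero. The only point worth double-checking is the notational one that $\Delta^{\Gamma}(\Phi_k)$ in the statement means $\Delta^{\Gamma_k}(\Phi_k)$, i.e., the fixed-point complex with respect to the specific group $\Gamma_k$ indexed by $k$.
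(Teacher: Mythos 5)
Your proof is correct and is exactly the paper's argument: the paper's proof of this lemma consists of the single line ``Follows immediately from Corollary~\ref{cor:criterion_mono} and Theorem~\ref{thm:smith}.'' You have simply spelled out the intended deduction (Smith's congruence forces $\hat{\chi}(\Delta(\Phi_k))\not\equiv 0 \bmod p_k$, hence $\hat{\chi}(\Delta(\Phi_k))\neq 0$, so Corollary~\ref{cor:criterion_mono} applies), including the correct reading of $\Delta^\Gamma(\Phi_k)$ as $\Delta^{\Gamma_k}(\Phi_k)$.
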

\begin{proof}
Follows immediately from Corollary~\ref{cor:criterion_mono} and Theorem~\ref{thm:smith}.
\end{proof} 

Intuitively, Lemma~\ref{lem:top_approach} states that instead of analyzing $\hat{\chi}(\Delta(\Phi_k))$ which might be tedious, it suffices to prove that the reduced Euler characteristic of the fixed-point complex of $\Delta(\Phi_k)$ with respect to a $p$-power group is not $0$ modulo $p$. For our purposes it will suffice to use the groups $\mathbb{Z}_p$ for prime numbers $p$, explained as follows. Recall that the ground set of $\Delta(\Phi_p)$ is the set of all edges of the labeled complete graph on $p$ vertices. Now $b \in \mathbb{Z}_p$ is interpreted as a relabeling $x \mapsto x+b$ of the vertices\footnote{Here $+$ is addition modulo $p$.}, which induces an operation on the edges by mapping the edge $\{x,y\}$ to the edge $\{x+b,y+b\}$. We remark that this group was also used in an intermediate step in~\cite{topological_evasiveness}. It can easily be verified that this mapping is a group operation. Furthermore $\Delta(\Phi_p)$ is a $\mathbb{Z}_p$-simplicial complex with respect to this operation as $\Phi$ is invariant under relabeling of vertices. Hence the fixed-point complex $\Delta^{\mathbb{Z}_p}(\Phi_p)$ is defined. Furthermore observe that every orbit of the group operation is an Hamilton cycle. We illustrate $\Delta^{\mathbb{Z}_7}(\Phi_7)$ for some properties $\Phi$ in Figure~\ref{fig:p_seven}.

\begin{figure}
\begin{tikzpicture}[-,thick, scale=0.3856]
  \node[circle,inner sep=1pt,draw](0) at (90+0*360/7:2) {\small $0$};
  \node[circle,inner sep=1pt,draw](1) at (90+1*360/7:2) {\small $1$};
  \node[circle,inner sep=1pt,draw](2) at (90+2*360/7:2) {\small $2$};
  \node[circle,inner sep=1pt,draw](3) at (90+3*360/7:2) {\small $3$};
  \node[circle,inner sep=1pt,draw](4) at (90+4*360/7:2) {\small $4$};
  \node[circle,inner sep=1pt,draw](5) at (90+5*360/7:2) {\small $5$};
  \node[circle,inner sep=1pt,draw](6) at (90+6*360/7:2) {\small $6$};
  \node(7) at (0,-3.5) {\footnotesize $\mathcal{O}_1$};
  
  \draw (0)--(1);\draw (1)--(2);\draw (2)--(3);\draw (3)--(4);\draw (4)--(5);
  \draw (5)--(6);\draw (6)--(0);
  
 \begin{scope}[xshift=6cm]
 \node[circle,inner sep=1pt,draw](0) at (90+0*360/7:2) {\small $0$};
  \node[circle,inner sep=1pt,draw](1) at (90+1*360/7:2) {\small $1$};
  \node[circle,inner sep=1pt,draw](2) at (90+2*360/7:2) {\small $2$};
  \node[circle,inner sep=1pt,draw](3) at (90+3*360/7:2) {\small $3$};
  \node[circle,inner sep=1pt,draw](4) at (90+4*360/7:2) {\small $4$};
  \node[circle,inner sep=1pt,draw](5) at (90+5*360/7:2) {\small $5$};
  \node[circle,inner sep=1pt,draw](6) at (90+6*360/7:2) {\small $6$};
  \node(7) at (0,-3.5) {\footnotesize $\mathcal{O}_2$};
  \draw (0)--(2);\draw (2)--(4);\draw (4)--(6);\draw (6)--(1);\draw (1)--(3);
  \draw (3)--(5);\draw (5)--(0);
 \end{scope}
 
 \begin{scope}[xshift=12cm]
 \node[circle,inner sep=1pt,draw](0) at (90+0*360/7:2) {\small $0$};
  \node[circle,inner sep=1pt,draw](1) at (90+1*360/7:2) {\small $1$};
  \node[circle,inner sep=1pt,draw](2) at (90+2*360/7:2) {\small $2$};
  \node[circle,inner sep=1pt,draw](3) at (90+3*360/7:2) {\small $3$};
  \node[circle,inner sep=1pt,draw](4) at (90+4*360/7:2) {\small $4$};
  \node[circle,inner sep=1pt,draw](5) at (90+5*360/7:2) {\small $5$};
  \node[circle,inner sep=1pt,draw](6) at (90+6*360/7:2) {\small $6$};
  \node(7) at (0,-3.5) {\footnotesize $\mathcal{O}_3$};
  \draw (0)--(3);\draw (3)--(6);\draw (6)--(2);\draw (2)--(5);\draw (5)--(1);
  \draw (1)--(4);\draw (4)--(0);
 \end{scope}
 
 \begin{scope}[xshift=18cm]
 \node[circle,inner sep=1pt,draw](0) at (90+0*360/7:2) {\small $0$};
  \node[circle,inner sep=1pt,draw](1) at (90+1*360/7:2) {\small $1$};
  \node[circle,inner sep=1pt,draw](2) at (90+2*360/7:2) {\small $2$};
  \node[circle,inner sep=1pt,draw](3) at (90+3*360/7:2) {\small $3$};
  \node[circle,inner sep=1pt,draw](4) at (90+4*360/7:2) {\small $4$};
  \node[circle,inner sep=1pt,draw](5) at (90+5*360/7:2) {\small $5$};
  \node[circle,inner sep=1pt,draw](6) at (90+6*360/7:2) {\small $6$};
  \node(7) at (0,-3.5) {\footnotesize $\mathcal{O}_1 \cup \mathcal{O}_2$};
  \draw (0)--(1);\draw (1)--(2);\draw (2)--(3);\draw (3)--(4);\draw (4)--(5);
  \draw (5)--(6);\draw (6)--(0);
  \draw (0)--(2);\draw (2)--(4);\draw (4)--(6);\draw (6)--(1);\draw (1)--(3);
  \draw (3)--(5);\draw (5)--(0);
 \end{scope}
 
 \begin{scope}[xshift=24cm]
 \node[circle,inner sep=1pt,draw](0) at (90+0*360/7:2) {\small $0$};
  \node[circle,inner sep=1pt,draw](1) at (90+1*360/7:2) {\small $1$};
  \node[circle,inner sep=1pt,draw](2) at (90+2*360/7:2) {\small $2$};
  \node[circle,inner sep=1pt,draw](3) at (90+3*360/7:2) {\small $3$};
  \node[circle,inner sep=1pt,draw](4) at (90+4*360/7:2) {\small $4$};
  \node[circle,inner sep=1pt,draw](5) at (90+5*360/7:2) {\small $5$};
  \node[circle,inner sep=1pt,draw](6) at (90+6*360/7:2) {\small $6$};
  \node(7) at (0,-3.5) {\footnotesize $\mathcal{O}_1 \cup \mathcal{O}_3$};
  \draw (0)--(1);\draw (1)--(2);\draw (2)--(3);\draw (3)--(4);\draw (4)--(5);
  \draw (5)--(6);\draw (6)--(0);
  \draw (0)--(3);\draw (3)--(6);\draw (6)--(2);\draw (2)--(5);\draw (5)--(1);
  \draw (1)--(4);\draw (4)--(0);
 \end{scope}
 
 \begin{scope}[xshift=30cm]
 \node[circle,inner sep=1pt,draw](0) at (90+0*360/7:2) {\small $0$};
  \node[circle,inner sep=1pt,draw](1) at (90+1*360/7:2) {\small $1$};
  \node[circle,inner sep=1pt,draw](2) at (90+2*360/7:2) {\small $2$};
  \node[circle,inner sep=1pt,draw](3) at (90+3*360/7:2) {\small $3$};
  \node[circle,inner sep=1pt,draw](4) at (90+4*360/7:2) {\small $4$};
  \node[circle,inner sep=1pt,draw](5) at (90+5*360/7:2) {\small $5$};
  \node[circle,inner sep=1pt,draw](6) at (90+6*360/7:2) {\small $6$};
  \node(7) at (0,-3.5) {\footnotesize $\mathcal{O}_2 \cup \mathcal{O}_3$};
  \draw (0)--(2);\draw (2)--(4);\draw (4)--(6);\draw (6)--(1);\draw (1)--(3);
  \draw (3)--(5);\draw (5)--(0);
  \draw (0)--(3);\draw (3)--(6);\draw (6)--(2);\draw (2)--(5);\draw (5)--(1);
  \draw (1)--(4);\draw (4)--(0);
 \end{scope}
 
 \begin{scope}[xshift=36cm]
 \node[circle,inner sep=1pt,draw](0) at (90+0*360/7:2) {\small $0$};
  \node[circle,inner sep=1pt,draw](1) at (90+1*360/7:2) {\small $1$};
  \node[circle,inner sep=1pt,draw](2) at (90+2*360/7:2) {\small $2$};
  \node[circle,inner sep=1pt,draw](3) at (90+3*360/7:2) {\small $3$};
  \node[circle,inner sep=1pt,draw](4) at (90+4*360/7:2) {\small $4$};
  \node[circle,inner sep=1pt,draw](5) at (90+5*360/7:2) {\small $5$};
  \node[circle,inner sep=1pt,draw](6) at (90+6*360/7:2) {\small $6$};
  \node (7) at (0,-3.5) {\footnotesize $\mathcal{O}_1 \cup \mathcal{O}_2 \cup \mathcal{O}_3$};
  \draw (0)--(1);\draw (1)--(2);\draw (2)--(3);\draw (3)--(4);\draw (4)--(5);
  \draw (5)--(6);\draw (6)--(0);
  \draw (0)--(2);\draw (2)--(4);\draw (4)--(6);\draw (6)--(1);\draw (1)--(3);
  \draw (3)--(5);\draw (5)--(0);
  \draw (0)--(3);\draw (3)--(6);\draw (6)--(2);\draw (2)--(5);\draw (5)--(1);
  \draw (1)--(4);\draw (4)--(0);
 \end{scope}

\end{tikzpicture}
\caption{\label{fig:p_seven}Non-empty unions of orbits on the operation of $\mathbb{Z}_7$ on the edge set of the labeled graph with $7$ vertices. If $\Phi$ is trivially true then $\Delta^{\mathbb{Z}_7}(\Phi_7)$ contains all of the above subsets of orbits. If $\Phi$ holds only for bipartite graphs then none of the above subsets is contained in $\Delta^{\mathbb{Z}_7}(\Phi_7)$. If $\Phi$ is planarity then $\Delta^{\mathbb{Z}_7}(\Phi_7) = \{ \mathcal{O}_1 , \mathcal{O}_2 , \mathcal{O}_3\}$. More exotically, if $\Phi$ is the property of not being $5$-edge-connected then $\Delta^{\mathbb{Z}_7}(\Phi_7)$ contains every subset of orbits except for $\mathcal{O}_1 \cup \mathcal{O}_2 \cup \mathcal{O}_3$.}
\end{figure}
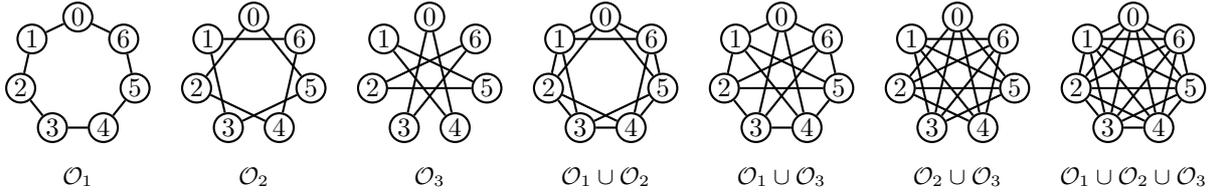

Note that, given a prime $p>2$, the ground set of $\Delta^{\mathbb{Z}_p}(\Phi_p)$ consists of exactly $\frac{1}{2}(p-1)$ elements. In particular those are the Hamiltonian cycles $H_1 = (0,1,2,\dots)$, $H_2=(0,2,4,\dots)$, $H_3=(0,3,6,\dots)$, $\dots$, $H_{\frac{1}{2}(p-1)} =(0,\frac{1}{2}(p-1), p-1,\dots)$. Equivalently, $H_i$ is the orbit of the (labeled) edge $\{0,i\}$ under the operation of $\mathbb{Z}_p$ for $i \in \{1,\dots,\frac{1}{2}(p-1)\}$ and it can easily be verified that those are all orbits of the group operation. In what follows, given a non-empty set $P \subseteq \{1,\dots,\frac{1}{2}(p-1)\}$, we write $H_P$ for the graph with vertices (labeled with) $\{0,\dots,p-1\}$ and edges $\bigcup_{i\in P} H_i$.  

\begin{fact}
Let $P$ be non-empty subset of  $\{1,\dots,\frac{1}{2}(p-1)\}$. Then $P \in \Delta^{\mathbb{Z}_p}(\Phi_p) \Leftrightarrow H_P \in \Phi_p$.
\end{fact}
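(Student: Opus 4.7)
The plan is to unpack the definitions, since the statement is essentially a translation between the orbit-based description of $\Delta^{\mathbb{Z}_p}$ and the graph-theoretic description of $H_P$. The only nontrivial ingredient — that the orbits of $\mathbb{Z}_p$ on $E(K_p)$ are precisely the Hamilton cycles $H_1,\dots,H_{(p-1)/2}$ — has already been established in the paragraph preceding the fact, so I would reference it rather than redo the verification.

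First I would apply the definition of the fixed-point complex given in the preliminaries: for a non-empty subset $P \subseteq \{1,\dots,(p-1)/2\}$, membership $P \in \Delta^{\mathbb{Z}_p}(\Phi_p)$ is equivalent to $\bigcup_{i \in P} H_i \in \Delta(\Phi_p)$. Here I use that the orbits of the $\mathbb{Z}_p$-action on the ground set $E(K_p)$ are exactly $H_1,\dots,H_{(p-1)/2}$, identified with indices $1,\dots,(p-1)/2$.

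Next I would note that since $P$ is non-empty and each $H_i$ is a non-empty set of edges, the union $\bigcup_{i \in P} H_i$ is non-empty. By the very construction of $H_P$ as the graph on vertex set $\{0,\dots,p-1\}$ with edge set $\bigcup_{i \in P} H_i$, this union coincides with $E(H_P)$. Unpacking $\Delta(\Phi_p) = \epk \setminus \{\emptyset\}$, the condition $E(H_P) \in \Delta(\Phi_p)$ is then equivalent to $\Phi$ holding on the labeled graph with this edge set, which — since $\Phi$ is a graph property and hence isomorphism-invariant — is equivalent to $H_P \in \Phi_p$. Chaining these equivalences yields the claim.

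There is no real obstacle here beyond a careful bookkeeping of what the ground set, the orbits, and the simplicial complex are; the fact is definitional once the orbit structure of $\mathbb{Z}_p$ on $E(K_p)$ is in hand. The entire argument can be written in a few lines.
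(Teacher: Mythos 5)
Your proof is correct and matches the paper's treatment: the paper states this as a \emph{Fact} without proof precisely because, as you observe, it is an immediate unpacking of the definition of the fixed-point complex, the identification of the $\mathbb{Z}_p$-orbits on $E(K_p)$ with the Hamilton cycles $H_1,\dots,H_{(p-1)/2}$, and the definition of $H_P$. Your handling of the non-emptiness condition and isomorphism-invariance is the right bookkeeping; nothing is missing.
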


Now we have everything we need to prove our main result. We start with monotone properties that are false on odd cycles or true on odd antiholes.
\begin{lemma}\label{lem:hardness_odd_cycles_antiholes}
Let $\Phi$ be a non-trivial monotone graph property. If $\Phi$ does not hold on odd cycles or if~$\Phi$ holds on odd anti-holes then there exists a constant $N\in \mathbb{N}$ such that $\hat{\chi}(\Delta^{\mathbb{Z}_p}(\Phi_p)) \not\equiv 0 \mod p$ for every prime $p>N$.
\end{lemma}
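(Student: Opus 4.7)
The plan is to read off the structure of $\Delta^{\mathbb{Z}_p}(\Phi_p)$ directly from the structural hypothesis on $\Phi$, and then to exploit the observation already recorded in the excerpt that the ground set of $\Delta^{\mathbb{Z}_p}(\Phi_p)$ consists of the $m := (p-1)/2$ orbits $H_1,\dots,H_m$, each of which is a Hamilton cycle on $p$ vertices and hence an odd cycle (since $p$ is an odd prime). Writing $H_A := \bigcup_{i\in A} H_i$ for $A\subseteq\{1,\dots,m\}$, recall that $A\in \Delta^{\mathbb{Z}_p}(\Phi_p)$ iff $H_A\in \Phi_p$, and note two key pieces of additional structure: the orbits partition $E(K_p)$, so $H_{\{1,\dots,m\}} = K_p$, and $H_{\{1,\dots,m\}\setminus\{i\}} = \overline{H_i}$ is precisely an odd anti-hole on $p$ vertices.

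First I would dispose of the ``odd cycles'' case. If $\Phi$ is false on odd cycles, then for every $i$ the graph $H_i$ (a Hamilton cycle on an odd prime number of vertices) is outside $\Phi_p$, so no singleton $\{i\}$ lies in $\Delta^{\mathbb{Z}_p}(\Phi_p)$. Since $\Delta^{\mathbb{Z}_p}(\Phi_p)$ is itself a simplicial complex, this forces it to be empty, and Fact~\ref{fact:euler_red_euler} immediately gives $\hat{\chi}(\emptyset) = 1$, which is non-zero modulo any prime. For this case $N$ can be taken to be $2$.

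Next I would handle the ``odd anti-holes'' case. Here each complement $\overline{H_i} = H_{\{1,\dots,m\}\setminus\{i\}}$ lies in $\Phi_p$, so every set of the form $\{1,\dots,m\}\setminus\{i\}$ is a simplex in $\Delta^{\mathbb{Z}_p}(\Phi_p)$. Using monotonicity of $\Phi$, every non-empty proper subset $A\subsetneq\{1,\dots,m\}$ sits inside some $\{1,\dots,m\}\setminus\{i\}$ and thus lies in the complex. By non-triviality of $\Phi$, one can choose $N$ large enough so that $\Phi(K_p)=0$ for every prime $p>N$; for such $p$ the full set $\{1,\dots,m\}$ is \emph{not} a simplex. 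Hence $\Delta^{\mathbb{Z}_p}(\Phi_p)$ is precisely the boundary of the $(m-1)$-simplex on $\{1,\dots,m\}$, and the reduced Euler characteristic of that complex is a standard binomial identity evaluating to $(-1)^{m+1} = \pm 1$, again non-zero modulo $p$.

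The main conceptual step, and the only one that really uses the arithmetic of $p$, is the observation that the $\mathbb{Z}_p$-orbits $H_1,\dots,H_m$ are Hamilton cycles and partition $E(K_p)$: this is what makes singletons of the ground set correspond to odd cycles and their complements to odd anti-holes, linking the topological side (the fixed-point complex) to the graph-theoretic hypothesis. Primality of $p$ is crucial here to ensure that each $H_i$ is a single $p$-cycle rather than a disjoint union of shorter ones. Once this identification is in place, the two cases reduce to computing $\hat{\chi}$ of either the empty complex or the boundary of a simplex, and the conclusion follows from Lemma~\ref{lem:top_approach} via Theorem~\ref{thm:smith}.
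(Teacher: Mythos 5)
Your proposal is correct and follows essentially the same route as the paper: the empty complex in the odd-cycle case, and the boundary of the $(m-1)$-simplex (all non-empty proper subsets of the orbit set, obtained from the anti-hole hypothesis plus monotonicity, with the full set excluded via non-triviality) in the anti-hole case, yielding $\hat\chi = 1$ and $\hat\chi = (-1)^{m+1}$ respectively, exactly as in the paper's computation.
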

\begin{proof}
If $\Phi$ does not hold on odd cycles then $\Delta^{\mathbb{Z}_p}(\Phi_p) = \emptyset$ and hence \[\hat{\chi}(\Delta^{\mathbb{Z}_p}(\Phi_p)) = 1- \chi(\Delta^{\mathbb{Z}_p}(\Phi_p)) = 1-0 = 1 \not\equiv 0 \mod p \,. \]
As $\Phi$ is non-trivial there exists $N \in \mathbb{N}$ such that $\Phi(K_k)=0$ for all $k>N$. Now if $\Phi$ holds on odd anti-holes then $\Delta^{\mathbb{Z}_p}(\Phi_p) = \{ P ~|~ \emptyset \subsetneq P \subsetneq \{1,\dots,\frac{1}{2}(p-1)\} \}$ for all $p> N$ because $\Phi$ is monotone and $H_P$ is an anti-hole if and only if $\#P= \frac{1}{2}(p-1)-1$. Furthermore, $\Phi$ does not hold on $H_{\{1,\dots,\frac{1}{2}(p-1)\}} \cong K_p$ as $p>N$. Hence
\begin{align}
~&\hat{\chi}(\Delta^{\mathbb{Z}_p}(\Phi_p)) \\
~&=  \sum_{i\geq 0} (-1)^i \cdot \# \{ P \in \Delta^{\mathbb{Z}_p}(\Phi_p) \cup \{\emptyset\} ~|~ \#P = i\}\label{eq:use_fact_euler}\\
~&= \sum_{i\geq 0} (-1)^i \cdot \# \{ P \subsetneq \{1,\dots,\frac{1}{2}(p-1)\} ~|~ \#P = i\}\\
~&= \left(\sum_{P \subseteq \{1,\dots,\frac{1}{2}(p-1)\}} (-1)^{\#P}\right) - (-1)^{\frac{1}{2}(p-1)} = (-1)^{\frac{1}{2}(p-1) +1} \not\equiv 0 \mod p \,.
\end{align}
Note that (\ref{eq:use_fact_euler}) follows from Fact~\ref{fact:euler_red_euler}.
\end{proof}

We continue with one more exotic property which illustrates the utility of the topological approach by exploiting the simple structure of $\Delta^{\mathbb{Z}_p}(\Phi_p)$.

\begin{lemma}\label{lem:hardness_connectivite}
Let $c \in \mathbb{N}$ be an arbitrary constant and let $\Phi$ be the graph property of being not $(c+1)$-edge-connected. Then $\hat{\chi}(\Delta^{\mathbb{Z}_p}(\Phi_p)) \not\equiv 0 \mod p$ for every prime $p>c+3$.
\end{lemma}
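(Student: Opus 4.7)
The plan is to first identify the fixed-point complex $\Delta^{\mathbb{Z}_p}(\Phi_p)$ completely explicitly. By the fact stated just before the lemma, a non-empty subset $P \subseteq \{1,\dots,(p-1)/2\}$ lies in $\Delta^{\mathbb{Z}_p}(\Phi_p)$ iff $H_P$ fails to be $(c+1)$-edge-connected. Observe that $H_P$ is a Cayley graph of $\mathbb{Z}_p$ with symmetric connection set $\{\pm i : i \in P\} \subseteq \mathbb{Z}_p \setminus \{0\}$; hence $H_P$ is vertex-transitive and $2\#P$-regular. Since $p$ is prime every non-zero element generates $\mathbb{Z}_p$, so $H_P$ is connected for every non-empty $P$.

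Next, I would invoke the classical theorem of Mader/Watkins stating that every connected vertex-transitive graph has edge-connectivity equal to its minimum degree. Applied to $H_P$, this gives that $H_P$ is exactly $2\#P$-edge-connected, and therefore $H_P$ fails to be $(c+1)$-edge-connected iff $2\#P \leq c$, i.e.\ iff $\#P \leq m := \lfloor c/2 \rfloor$. Writing $n := (p-1)/2$, the simplices of $\Delta^{\mathbb{Z}_p}(\Phi_p)$ are therefore precisely the non-empty subsets of $\{1,\dots,n\}$ of cardinality at most $m$.

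Using Fact~\ref{fact:euler_red_euler} together with the standard telescoping identity $\sum_{i=0}^{m}(-1)^i\binom{n}{i} = (-1)^m\binom{n-1}{m}$ (obtained by iterating Pascal's rule), the reduced Euler characteristic collapses to
\[
\hat{\chi}(\Delta^{\mathbb{Z}_p}(\Phi_p)) \;=\; \sum_{i=0}^{m}(-1)^i\binom{n}{i} \;=\; (-1)^m \binom{n-1}{m}\,.
\]
The hypothesis $p > c+3$ guarantees $n-1 = (p-3)/2 > c/2 \geq m$, so $\binom{n-1}{m}$ is a positive integer whose numerator and denominator involve only factorials of integers strictly smaller than $p$. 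Consequently $p \nmid \binom{n-1}{m}$, which proves the claim.

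The main obstacle is the circulant-connectivity step: one must justify that the edge-connectivity of $H_P$ is \emph{exactly} $2\#P$ rather than possibly smaller, which is where Mader's theorem for vertex-transitive graphs is essential. Once this structural identification of $\Delta^{\mathbb{Z}_p}(\Phi_p)$ with the $m$-skeleton of the simplex on $n$ vertices is in hand, the remainder of the argument reduces to a closed-form Euler characteristic computation and the elementary observation that binomial coefficients with top entry below $p$ are coprime to $p$.
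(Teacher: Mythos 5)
Your proof is correct and follows the same overall route as the paper's: identify $\Delta^{\mathbb{Z}_p}(\Phi_p)$ as the family of non-empty subsets of $\{1,\dots,\tfrac{1}{2}(p-1)\}$ of cardinality at most $m=\lfloor c/2\rfloor$, evaluate the alternating sum via $\sum_{i=0}^{m}(-1)^i\binom{n}{i}=(-1)^{m}\binom{n-1}{m}$ with $n=\tfrac{1}{2}(p-1)$, and conclude from $n-1<p$ that this binomial coefficient is a positive integer not divisible by $p$. The single point of divergence is how the edge-connectivity of $H_P$ is pinned down. You invoke the Mader--Watkins theorem that a connected vertex-transitive graph has edge-connectivity equal to its degree, which gives the exact value $2\#P$; the paper instead argues elementarily, getting the upper bound from the minimum degree and the lower bound by noting that $H_P$ is a union of $\#P$ pairwise edge-disjoint Hamilton cycles, each of which must lose at least two edges before the graph can disconnect, so that $\#P>\lfloor c/2\rfloor$ already forces edge-connectivity at least $2(\lfloor c/2\rfloor+1)\ge c+1$. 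Your citation is a legitimate shortcut (and you correctly flag it as the load-bearing step); the paper's version buys self-containedness and needs only a one-sided bound rather than the exact edge-connectivity, while yours generalizes more readily to other vertex-transitive fixed-point graphs.
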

\begin{proof}
We rely on the following observation.
\begin{claim}\label{clm:connectivity}
The graph $H_P$ is $(c+1)$-edge-connected if and only if $\#P > \lfloor \frac{c}{2} \rfloor$.
\end{claim}
\begin{proof}
If $\#P \leq \lfloor \frac{c}{2} \rfloor$ then every vertex in $H_P$ has degree at most $c$, hence $H_P$ is not $(c+1)$-edge-connected. If $\#P > \lfloor \frac{c}{2} \rfloor$ then $H_P$ contains at least $\lfloor \frac{c}{2} \rfloor + 1$ pairwise edge-disjoint Hamilton cycles. Disconnecting the graph would require to remove at least two edges from every Hamilton cycle, i.e., at least $2 \cdot (\lfloor \frac{c}{2} \rfloor + 1) \geq c+1$ edges. Hence $H_P$ is $(c+1)$-edge-connected. 
\end{proof}
It follows from the Claim that \[\Delta^{\mathbb{Z}_p}(\Phi_p) = \{ P \subseteq \{1,\dots,\frac{1}{2}(p-1) ~|~ P \neq \emptyset \wedge \#P \leq  \lfloor \frac{c}{2} \rfloor \} \,. \] Hence
\begin{align}
~&\hat{\chi}(\Delta^{\mathbb{Z}_p}(\Phi_p)) \\
~&=  \sum_{i\geq 0} (-1)^i \cdot \# \{ P \in \Delta^{\mathbb{Z}_p}(\Phi_p) \cup \{\emptyset\} ~|~ \#P = i\}\label{eq:use_fact_euler_2}\\
~&= \sum_{i=0}^{\lfloor \frac{c}{2} \rfloor } (-1)^i \cdot \# \{ P \subseteq \{1,\dots,\frac{1}{2}(p-1)\} ~|~ \#P= i\}\\
~&= \sum_{i=0}^{\lfloor \frac{c}{2} \rfloor } (-1)^i \cdot \binom{\frac{1}{2}(p-1)}{i} = (-1)^{\lfloor \frac{c}{2} \rfloor} \cdot \binom{\frac{1}{2}(p-1)-1}{\lfloor \frac{c}{2} \rfloor} \not\equiv 0 \mod p \,.
\end{align}
Note that (\ref{eq:use_fact_euler_2}) follows from Fact~\ref{fact:euler_red_euler}.
\end{proof}
Finally, Theorem~\ref{thm:main} follows from Lemma~\ref{lem:top_approach},~\ref{lem:chakrabarti},~\ref{lem:hardness_odd_cycles_antiholes} and~\ref{lem:hardness_connectivite}.

\begin{proof}[Proof of Theorem~\ref{thm:main}]
If $\Phi$ is non-trivial and one of the conditions 1, 2 or 3 is true,  then Lemma~\ref{lem:hardness_odd_cycles_antiholes} and Lemma~\ref{lem:hardness_connectivite} imply the existence of a constant $N \in \mathbb{N}$ such that \[\hat{\chi}(\Delta^{\mathbb{Z}_p}(\Phi_p)) \not\equiv 0 \mod p\] holds for all primes $p>N$. Hence we can apply Lemma~\ref{lem:top_approach} by setting $\mathcal{K}$ to be the set of all primes $p>N$ and $\Gamma_p = \mathbb{Z}_p$ for all $p \in \mathcal{K}$. As $\mathcal{K}$ is obviously infinite, it only remains to show that it is \dense ~as well. However, this is an immediate consequence of Bertrand's postulate, stating that for all natural numbers $n> 3$ there exists at least one prime number $p$ such that $n<p< 2n-2$. 

If $\Phi$ satisfies condition 4, then the claim follows by Lemma~\ref{lem:chakrabarti} and Corollary~\ref{cor:criterion_mono}. To see this, note that the set \[\{k\in \mathbb{N}~|~k\equiv 1 \mod T_F\} \]
is infinite and \dense ~for every fixed graph $F$.
\end{proof}

%\[\hat{\chi}(\Delta^{\mathbb{Z}_p}(\Phi_p)) \equiv (-1)^{\lfloor \frac{c}{2} \rfloor}\binom{\frac{1}{2}(p-1)}{\lfloor \frac{c}{2} \rfloor} \mod p\]

\section{Non-monotone properties}\label{sec:non_mono}
In this section we turn to non-monotone properties and illustrate that Theorem~\ref{thm:clique_coef_int} itself is a useful criterion when it comes to establishing $\#\W$-hardness of $\#\indsubs(\Phi)$. Recall that, given a prime $q$ and a subset $\mathcal{Q}$ of $\{0,\dots,q-1\}$, the property $\modqa[q,\mathcal{Q}]$ holds on a graph $H$ if and only if $\left(\#E(H)~ \mathsf{mod} ~q\right) \in \mathcal{Q}$. Note that $\modqa[q,\mathcal{Q}]$ generalizes the property of having an even (or odd) number of edges as investigated in \cite{indsubs_eveness}. It turns out that any non-trivial modularity constraint with respect to a prime induces $\#\W$-hardness.

\begin{lemma}\label{lem:non_mono_one}
 Let $q$ be a prime number and $\mathcal{Q} \subseteq \{0,1,\ldots, q-1\}$ a subset which is neither empty nor the full set. Then for $\Phi=\modqa[q,\mathcal{Q}]$ and sufficiently large integers $n$, the sum $\sum_{A \in \epk} (-1)^{\#A}$ is non-vanishing for some $k \in \{n,n+1,n+2\}$.
\end{lemma}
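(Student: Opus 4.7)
The plan is a roots-of-unity filter followed by a dominant-term asymptotic argument. With $N_k := \binom{k}{2}$, $\omega := e^{2\pi i/q}$, $c_j := \sum_{r \in \mathcal{Q}} \omega^{-jr}$, and $\beta_j := 1 - \omega^j$, the identity $\mathbf{1}[m \equiv r \!\!\pmod{q}] = \tfrac{1}{q}\sum_{j} \omega^{j(m-r)}$ combined with the binomial theorem gives
\[
S_k \;:=\; \sum_{A \in \epk} (-1)^{\#A} \;=\; \frac{1}{q}\sum_{j=1}^{q-1} c_j\, \beta_j^{N_k},
\]
since the $j=0$ term vanishes ($\beta_0 = 0$). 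Non-triviality of $\mathcal{Q}$ forces at least one $c_j$ with $j \ge 1$ to be non-zero, and the index set $J := \{j : c_j \ne 0\}$ is closed under $j \mapsto q - j$ because $c_{q-j} = \overline{c_j}$. For $q = 2$ this collapses to $S_k = \pm 2^{N_k - 1} \ne 0$ for every $k \ge 2$, settling that case.

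For odd prime $q \ge 3$ I would take $j_0 \in J$ maximizing $|\beta_{j_0}| = 2\sin(\pi j_0/q)$; its conjugate partner $q - j_0$ also lies in $J$ and attains the same maximum, while every other element of $J$ gives strictly smaller $|\beta_j|$. Pairing these dominant terms yields
\[
qS_k \;=\; 2|c_{j_0}|\,|\beta_{j_0}|^{N_k} \cos \alpha_k \;+\; E_k, \qquad \alpha_k := \arg c_{j_0} + N_k \arg \beta_{j_0},
\]
with $|E_k| \le C\rho^{N_k} |\beta_{j_0}|^{N_k}$ for some fixed $\rho \in [0,1)$ (and $\rho = 0$ when $|J| = 2$). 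Assuming toward a contradiction that $S_n = S_{n+1} = S_{n+2} = 0$, one obtains $|\cos \alpha_k| = O(\rho^{N_n})$ for $k \in \{n,n+1,n+2\}$, and the local behaviour of $\cos$ near its zeros forces $\alpha_k = (m_k + \tfrac{1}{2})\pi + O(\rho^{N_n})$ for integers $m_k$. Using $N_{k+1} - N_k = k$ and differencing twice isolates $\arg \beta_{j_0}$ itself as an integer multiple of $\pi$ up to error $O(\rho^{N_n})$. However $\arg \beta_{j_0} = (2j_0 - q)\pi/(2q)$ cannot be an integer multiple of $\pi$ (that would force $2j_0 = q$, impossible for $j_0 \in \{1,\dots,q-1\}$ and odd $q$), so the distance from $\arg \beta_{j_0}$ to $\pi\mathbb{Z}$ is a fixed positive $\delta$; choosing $n$ large enough that $O(\rho^{N_n}) < \delta$ yields the desired contradiction.

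The main obstacle I expect is the bookkeeping of $E_k$ when $|J| > 2$: one must normalise by $|\beta_{j_0}|^{N_k}$ before invoking the Taylor expansion of $\cos$ at a half-integer multiple of $\pi$, and track that the final contradiction only requires the single integer combination $m_{n+2} - 2m_{n+1} + m_n$ to approximate $\arg \beta_{j_0}/\pi$ (the individual $m_k$'s themselves drift with $n$). When $|J| = 2$ no error term is present and the conclusion follows at once from $\arg \beta_{j_0} \notin \pi\mathbb{Z}$.
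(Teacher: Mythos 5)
Your proposal is correct and follows essentially the same route as the paper's proof: the same roots-of-unity expansion of $S_k$ (the paper derives it via a matrix recursion and diagonalization, you via a direct filter), isolation of the dominant complex-conjugate pair of terms, and the observation that the resulting cosine cannot (even approximately) vanish at three consecutive triangular exponents because $N_{k+2}-2N_{k+1}+N_k=1$ while $\arg\beta_{j_0}\notin\pi\mathbb{Z}$ for odd $q$. The only minor deviations are that you avoid the paper's cyclotomic-polynomial argument (which shows \emph{all} coefficients $c_j$, $j\neq 0$, are nonzero) by simply taking the dominant index within $J$, and your second-difference endgame replaces the paper's explicit case analysis on whether $2q\mid n$; both variants are valid.
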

\begin{proof}
 In the case $q=2$ all terms in the sum $\sum_{A \in \epk} (-1)^{\#A}$ have the same sign, so clearly the sum is never zero for $k \geq 1$. Thus we can assume $q \geq 3$.
 
 For $a=0,1,\ldots, q-1$ denote 
 \[S_a(m)=\sum_{j \equiv a ~\mathsf{mod}~ q} (-1)^j \binom{m}{j}.\]
 Given $n \geq 1$ the complete graph on $n$ vertices has $m=\binom{n}{2}$ edges and the number of subgraphs $G$ with $j$ edges is exactly $\binom{m}{j}$. Thus if we define
 \begin{equation} \label{eqn:modulo}
  S_\mathcal{Q}(m) = \sum_{a \in \mathcal{Q}} S_a(m)
 \end{equation}
 then we need to show that for $n$ sufficiently large there is $k \in \{n,n+1,n+2\}$ such that $S_\mathcal{Q}(m) \neq 0$ for $m=\binom{k}{2}$.
 
 The crucial point we are going to use is that the functions $S_a(m)$ satisfy a simple recursion. Indeed
 \begin{align*}S_a(m+1) &= \sum_{j \equiv a ~\mathsf{mod}~ q} (-1)^j \binom{m+1}{j}\\ &= \sum_{j \equiv a ~\mathsf{mod}~ q} (-1)^j \binom{m}{j} + (-1)^j \binom{m}{j-1} \\ &= S_a(m) - S_{a-1}(m), \end{align*}
 where it is understood that $S_{-1}(m)=S_{q-1}(m)$. Let $\vec S = (S_0, S_1, \ldots, S_{q-1})^T$, then expressing the recursion in matrix form we have
 \begin{equation} \label{eqn:matrixrecursion}
  \vec S(m+1) = M \vec S(m),\text{ for } 
  M=\begin{pmatrix}
     1 & 0 & \cdots & 0 & -1\\
     -1 & 1 &  & 0 & 0 \\
     0 & \ddots & \ddots & & \vdots\\
     0 & \cdots & -1&1 & 0\\
     0 & \cdots & 0&-1 & 1
    \end{pmatrix}.
 \end{equation}
In particular, we have
\begin{equation} \label{eqn:matpowerform} \vec S(m) = M^m \vec S(0),\text{ with } \vec S(0) = (1,0, \ldots, 0)^T.\end{equation}
Fortunately, it turns out to be easy to diagonalize the matrix $M$: for the $q$-th root of unity $\omega=\exp(2 \pi i/q)$ we have that the vectors
\[v_b = \frac{1}{q} \begin{pmatrix} 1 \\ \omega^b \\ \vdots \\ \omega^{b (q-1)} \end{pmatrix},\text{ for } b = 0,1, \ldots, q-1\]
are eigenvectors of $M$ for the eigenvalues $\lambda_b = 1-\omega^{-b}$. Moreover, we have that $\vec S(0)=v_0 + v_1 + \ldots + v_{q-1}$ is the sum of all these eigenvectors. Combining $M v_b = \lambda_b v_b$ with equation (\ref{eqn:matpowerform}) we obtain
\begin{equation} \label{eqn:powerfinal}
 \vec S(m) = M^m  \sum_{b=0}^{q-1} v_b=\sum_{b=0}^{q-1} \lambda_b ^m v_b = \frac{1}{q} \sum_{b=0}^{q-1} (1-\omega^{-b})^m \begin{pmatrix} 1 \\ \omega^b \\ \vdots \\ \omega^{b (q-1)} \end{pmatrix}.
\end{equation}
The desired number $S_\mathcal{Q}(m)$ is obtained by summing the components of this vector corresponding to $\mathcal{Q} \subset \{0,1,\ldots, q-1\}$ and we obtain
\begin{equation} \label{eqn:S_A}
 S_\mathcal{Q}(m) = \frac{1}{q} \sum_{b=0}^{q-1} (1-\omega^{-b})^m \left( \sum_{a \in \mathcal{Q}} \omega^{b a}\right).
\end{equation}
Now clearly the term for $b=0$ vanishes. As for the terms with $b \neq 0$ we see then that $\omega^b$ is again a primitive $q$-th root of unity. We claim that $\sum_{a \in \mathcal{Q}} \omega^{b a}$ is then nonzero. Indeed, otherwise the number $\omega^b$ is a zero of the polynomial $P_\mathcal{Q}(z)=\sum_{a \in \mathcal{Q}} z^a$.

In general it is true that a polynomial $P \in \mathbb{Q}[z]$ satisfies $P(\omega^b)=0$ iff $P$ is of the form $P(z)=Q(z) \Phi_q(z)$ with $\Phi_q(z)=z^{q-1}+ z^{q-2} + \ldots + z + 1$ the $q$th cyclotomic polynomial and $Q \in \mathbb{Q}[z]$ any polynomial. For this reason, the polynomial $\Phi_q$ is called the \emph{minimal polynomial} of $\omega^b$. 

Returning to the situation above we see that the polynomial $P_\mathcal{Q}$ is of degree at most $q-1$. Now if $P_\mathcal{Q}(\omega^b)=0$ then we could write $P_\mathcal{Q}(z)=Q(z) \Phi_p(z)$ and by degree considerations we would need to have $Q$ of degree $0$, i.e. a constant. But since we assumed that $\mathcal{Q}$ is not the empty set or the full set $\{0,1, \ldots, q-1\}$, the polynomial $P_\mathcal{Q}$ is not a rational multiple of $\Phi_q$. This gives a contradiction. 

Now we need to find arguments to show $S_\mathcal{Q}(m) \neq 0$ for suitable $m$, which we will later specialize to be of the form $m=\binom{k}{2}$. First, for large $m$ we claim that among the terms $(1-\omega^{-b})^m$ the ones with $b_{\pm}=(q \pm 1)/2$ dominate the others in absolute value. Denote $z_{\pm} = 1-\omega^{-b_{\pm}}$ then we illustrate this in Figure \ref{Fig:rootsofunity} for $q=5$.

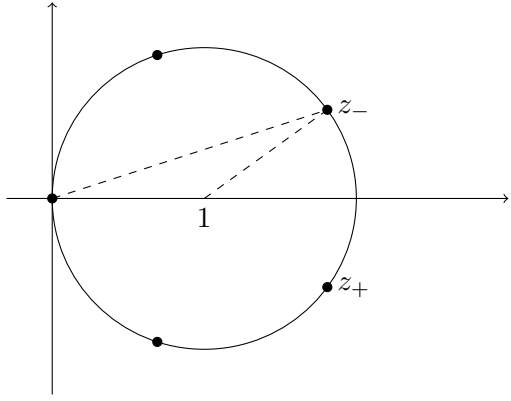
\begin{figure}
 \begin{center}
 \begin{tikzpicture}[scale=2]
  \draw[->] (-0.3,0) -- (3,0);
  \draw[->] (0,-1.3) -- (0,1.3);
  \draw (1,0) circle (1);
  \def\nuPi{3.14159265}
  \foreach \x in {0,1,2,3,4}
    \filldraw[black] ({1-cos(-72*\x)},{sin(72*\x)}) circle (0.03);
  \draw[dashed] (1,0) node[below]{$1$} -- ({1-cos(-72*2)},{sin(72*2)}) node[right]{$z_-$} -- (0,0); 
  \draw ({1-cos(-72*3)},{sin(72*3)}) node[right]{$z_+$}; 
 \end{tikzpicture}
  \end{center}
 \caption{The points $1-\omega^{-b}$ for $q=5$ \label{Fig:rootsofunity}}
\end{figure}

To continue, from the isosceles triangle in Figure \ref{Fig:rootsofunity} one checks that $z_{\pm}$ are positive multiples of primitive $4q$-th roots of unity. Indeed, since $\omega^{-b_-} = \exp(- 2 \pi (q-1)/2q)$, the angle of this triangle at the point $1$ is $2 \pi (q-1)/2q$ and therefore the two identical smaller angles in the triangle are
\[\frac{1}{2} \left(\pi - 2 \pi \frac{q-1}{2q} \right) = \frac{1}{2} \pi \left(1-\frac{q-1}{q} \right) = \frac{1}{2q} \pi = \frac{2 \pi}{4 q}.\]
On the other hand, since they are complex conjugate, one sees that combining the two terms for $b=b_{\pm}$ in the sum (\ref{eqn:S_A}) we obtain
\[\frac{1}{q} \left(z_+^m \left( \sum_{a \in \mathcal{Q}} \omega^{b_+ a}\right) + z_-^m \left( \sum_{a \in \mathcal{Q}} \omega^{b_- a}\right) \right) = \frac{2}{q} \operatorname{Re} z_-^m \left( \sum_{a \in \mathcal{Q}} \omega^{b_- a}\right).\]
Now observe that the absolute value of the term $t_m=z_-^m ( \sum_{a \in \mathcal{Q}} \omega^{b_- a})$ diverges exponentially to infinity and dominates all terms for $b \neq b_{\pm}$ in the sum (\ref{eqn:S_A}). Moreover, the argument of $t_m$ (as a complex number) takes exactly $4q$ different values $\theta_1, \ldots, \theta_{4 q}$, using that $z_-$ is a multiple of a $4q$-th root of unity.
% \todo{gilt das, weil Potenzen von multiples von roots of unities vom Winkel her periodisch sind?}\todoA{Ja, genau. Wenn der Winkel einer komplexen Zahl $z$ gleich $\phi$ ist, dann ist der Winkel von $z^m$ gleich $m\phi$. Für die roots of unity ist $\phi=2 \pi /4q$ und damit wiederholt sich der Winkel (modulo $2 \pi$) jeweils nach $4 q$ Potenzen.}. 
Then we claim that for all $m$ such that $t_m$ is not pure imaginary, its real part still dominates the other summands in (\ref{eqn:S_A}). 
% \todo{Meinst du mit dominating jeden einzelnen Term, oder die Summe? Ich sehe es für einzelne Terme in der Summe, aber warum wird eine Summe verschiedener Terme auch von $t_m$ dominiert?} \todoA{Die Zahl der anderen Summanden bleibt ja konstant. Alle diese Summanden wachsen vom Betrag her wie Terme $b_i^m$ und für $m$ gross dominiert der Term mit dem grössten $b_i$ alle anderen (zusammen): wenn $b_{\max}$ der grösste ist und $b_{\max-1}$ der zweitgrösste, ist die Summe der $q-2$ verbliebenen Terme betragsmässig kleiner gleich $(q-2) b_{\max-1}^m$ und für $m$ gross wird das dominiert von $b_{max}^m$. Denkst du da sollten wir noch mehr schreiben? }
Indeed, we have the explicit estimate
\[|\operatorname{Re} t_m| \geq c |t_m|,\text{ for }c = \min(|\cos(\theta_i)| : i=1, \ldots, 4q\text{ with }\cos(\theta_i) \neq 0 ).\]
But recall that $|t_m|$ diverges exponentially with a base $|z_{\pm}|$. The finitely many other terms in the sum (\ref{eqn:S_A}) also have absolute value that is exponential in $m$ but with strictly smaller base. Thus for $m$ large, the term $c |t_m|$ dominates the combination of all the other summands.

As a conclusion from the claim we must show that for $n$ large, the term $t_m$ is not pure imaginary for $m=\binom{k}{2}$ for some $k \in \{n,n+1,n+2\}$. But assume that $t_{\binom{n}{2}}$ is imaginary, i.e. $\operatorname{arg}(t_{\binom{n}{2}})=\pm \pi/2$. Then since $\binom{n+1}{2} - \binom{n}{2} =n$ and thus $t_{\binom{n+1}{2}} = t_{\binom{n}{2}} z_-^n$, we have
\[\operatorname{arg}(t_{\binom{n+1}{2}}) = \operatorname{arg}(t_{\binom{n}{2}}) + n \operatorname{arg}(z_-) = \pm \frac{\pi}{2} + n \frac{\pi}{2 q}. \]
Now for most $n$ this will already no longer be of the form $\pi/2 + \ell \pi$ and so $t_{\binom{n+1}{2}}$ is not imaginary. In the unlucky case that $2q | n$, we see by the same procedure that then $t_{\binom{n+2}{2}}$ is not imaginary. In any case, we have found a suitable $k$.
\end{proof}

For the second non-monotone property, let $F$ be a connected graph. Then the property $\isof[F]$ holds on a graph $H$ if and only if $H$ contains an isolated subgraph that is isomorphic to $F$.

% \begin{theorem}\label{thm:main_non_mono}
% For all primes $q$, non-trivial subsets $A$ of $\{0,\dots,q-1\}$ and connected graphs~$F$, the problems $\#\indsubs(\modqa[q,A])$ and $\#\indsubs(\isof[F])$ are $\#\W$-hard and can not be solved in time $f(k)\cdot n^{o(k)}$ for any computable function $f$, unless ETH fails.
% \end{theorem}

\begin{lemma}\label{lem:non_mono_two}
 Let $F$ be a connected (unlabeled) graph on $f$ vertices. Then for $\Phi=\isof[F]$ the sum $S_k=\sum_{A \in \epk} (-1)^{\#A}$ is non-vanishing exactly for $k\geq f$ and $k \equiv 0,1$ mod $f$. 
\end{lemma}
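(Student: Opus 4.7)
The plan is to compute $S_k$ by direct inclusion--exclusion over the isolated $F$-components. If $k<f$ then $\epk=\emptyset$, so $S_k=0$ trivially. Henceforth I assume $k\geq f$. For each subset $U\subseteq[k]$ with $|U|=f$ and each edge set $B\subseteq\binom{U}{2}$ such that $(U,B)\cong F$, I define
\[ X_{U,B} \;=\; \bigl\{A\subseteq\tbinom{[k]}{2}\,:\,A\cap\tbinom{U}{2}=B,\ A\text{ has no edge between }U\text{ and }[k]\setminus U\bigr\}. \]
Because $F$ is connected, an isolated subgraph isomorphic to $F$ is precisely a connected component isomorphic to $F$, so $A\in\epk$ iff $A$ lies in at least one $X_{U,B}$. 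Applying inclusion--exclusion to this union gives
\[ S_k \;=\; \sum_{j\geq 1}(-1)^{j+1}\sum_{\{(U_i,B_i)\}_{i=1}^{j}}\ \sum_{A\in\bigcap_i X_{U_i,B_i}}(-1)^{|A|}, \]
where the middle sum is over unordered $j$-element families of pairs with pairwise disjoint $U_i$'s (other families give empty intersections, since overlapping or coinciding but $B$-incompatible components cannot both be present).

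For a surviving family, $A$ is forced on $\bigcup_i\binom{U_i}{2}$ and forbidden to cross between any $U_i$ and its complement, leaving only the edges inside the $r:=k-jf$ remaining vertices free. The fixed edges contribute $(-1)^{je}$ with $e=|E(F)|$, and the signed sum over the free edges is $(1-1)^{\binom{r}{2}}$, which is $1$ when $r\leq 1$ and $0$ otherwise. Thus only the families with $jf\in\{k-1,k\}$ survive. Counting configurations (choose the $j$ disjoint labelled $f$-subsets, then a realisation of $F$ on each, then divide by $j!$ for unordering) yields
\[ S_k \;=\; \sum_{\substack{j\geq 1\\ jf\in\{k-1,k\}}} (-1)^{j+1+je}\cdot\frac{k!\,n_F^{\,j}}{j!\,(f!)^{j}\,(k-jf)!},\qquad n_F:=\frac{f!}{|\auts(F)|}. \]

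For $f\geq 2$, the integers $k$ and $k-1$ cannot both be divisible by $f$, so the sum contains at most one term, and such a term is present and visibly nonzero exactly when $k\geq f$ and $k\equiv 0,1\pmod{f}$. This matches the claim for $f\geq 2$. For $f=1$, both congruences hold vacuously, and the two terms could \emph{a priori} cancel; however, $F=K_1$ forces $e=0$ and $n_F=1$, so direct substitution gives $S_k=(-1)^k(k-1)$ for $k\geq 2$ and $S_1=1$ (only the $j=1$ term exists, since $j=k-1=0$ is excluded), which is nonzero for every $k\geq 1$, again matching the claim (since $k\equiv 0,1\pmod{1}$ is automatic).

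The main obstacle is purely bookkeeping: correctly handling the sign $(-1)^{j+1+je}$ together with the multinomial count, and in particular treating the edge case $k=1$ when $f=1$ where the two-term formula degenerates. The combinatorial heart of the argument is the identity $(1-1)^{\binom{r}{2}}=0$ for $r\geq 2$, which is what forces $jf\in\{k-1,k\}$ and hence singles out exactly the two residue classes $k\equiv 0,1\pmod f$.
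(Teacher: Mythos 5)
Your proposal is correct and follows essentially the same route as the paper: the same inclusion--exclusion over isolated copies of $F$ (your pairs $(U,B)$ are the paper's subgraphs $F_i\subseteq K_k$), the same observation that non-disjoint families contribute nothing, the same vanishing of the free-edge sum $(1-1)^{\binom{r}{2}}$ for $r\geq 2$ forcing $jf\in\{k-1,k\}$, and the same closed-form count with the separate treatment of $f=1$. No gaps; your handling of the $k<f$ and $k=1$ edge cases is if anything slightly more explicit than the paper's.
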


\begin{proof}
 Fix $k\geq f$ and let $\mathcal{F}$ be the set of subgraphs of the complete graph $K_k$ isomorphic to $F$. For $F_i \in \mathcal{F}$ let $A_{F_i}$ be the set of graphs on the vertices $[k]$ containing $F_i$ as an isolated subgraph, i.e. a connected component. Then we are interested in the sum
 \[S_k = \sum_{G \in \bigcup A_{F_i}} (-1)^{\#E(G)}.\]
 We compute it via inclusion-exclusion to be
 \begin{equation} \label{eqn:Snisolated} S_k = \sum_{l \geq 1} (-1)^{l+1} \sum_{F_{i_1}, \ldots, F_{i_l} \in \mathcal{F}} \sum_{G \in A_{F_{i_1}} \cap \ldots \cap A_{F_{i_l}}} (-1)^{\#E(G)}\end{equation}
 Note that in the above sum, the graphs $F_{i_1}, \ldots, F_{i_l}$ are assumed pairwise distinct elements of $\mathcal{F}$.
 
 For any $l \geq 1$, we see that the intersection $A_{F_{i_1}} \cap \ldots \cap A_{F_{i_l}}$ is empty if the graphs $F_{i_j}$ are not pairwise vertex-disjoint. Indeed, if two of them share the same vertex there can be no graph containing both of them as isolated subgraphs. On the other hand, if all $F_{i_j}$ are vertex-disjoint, the intersection $A_{F_{i_1}} \cap \ldots \cap A_{F_{i_l}}$ is just the set of graphs containing all of the $F_{i_j}$ as isolated subgraphs (here we use $F$ connected). We can understand this set very explicitly: $F$ has $f$ vertices, so there are $k-lf$ vertices not contained in any $F_{i_j}$ and between those we have full freedom to put edges or not. The total number of possibilities is $2^{\binom{k-lf}{2}}$. Moreover, we can explicitly calculate the sum appearing above as
 \[\sum_{G \in A_{F_{i_1}} \cap \ldots \cap A_{F_{i_l}}} (-1)^{\#E(G)} = \begin{cases}
           (-1)^{l \cdot \#E(F)},&\text{ for }k-lh=0,1\\
           0,&\text{ for }k-lh \geq 2.                                                                         \end{cases}
\]
This is because for $k-lh \geq 2$ and two vertices $v,w \in [k] \setminus \bigcup_{j=1}^l V(F_{i_j})$ the operation of flipping the edge ${v,w}$ gives a bijective map from $A_{F_{i_1}} \cap \ldots \cap A_{F_{i_l}}$ to itself flipping the parity of the number of edges. 

Going back to (\ref{eqn:Snisolated}) let us first treat the special case $f=1$, i.e. $F$ is an isolated vertex. Then all terms for $1 \leq l \leq k-2$ vanish and we are left with
\[S_k = (-1)^{k-1+1} k + (-1)^{k+1} = (-1)^k (k-1)\text{ for }k \geq 2, S_1 = 1.\]
This never vanishes, proving the theorem.

Now assume $f \geq 2$. Then we see that all summands for $1 \leq l < \lfloor k/f \rfloor=l_0$ vanish. Writing $k=l_0 f + a$ with $0 \leq a \leq f-1$ we note that for $a \geq 2$ the remaining summands for $l=l_0$ also vanish. On the other hand, for $a=0,1$ the summands for $l=l_0$ all have the same sign $(-1)^{l+1} (-1)^{l \cdot \#E(F)}$ and there is at least one nonzero summand like this. Thus the sum does not vanish for $k$ of this form. 

In fact we can compute $S_k$ to be 
\[S_k = (-1)^{l+1} (-1)^{l \cdot \#E(F)} \frac{k!}{(f!)^{l_0} l_0!} \left(\frac{f!}{\#\auts(F)} \right)^{l_0} \]
in the cases $a=0,1$. Indeed, the factor $\frac{k!}{(f!)^{l_0} l_0!}$ describes the number of possibilities to choose an unordered collection of $l_0$ sets of size $f$ among the $k$ vertices. For each of these sets there are $\frac{f!}{\#\auts(F)}$ possibilities to put a graph isomorphic to $F$ on the vertices of this set, by the Orbit Stabilizer theorem.
\end{proof}

Now Lemma~\ref{lem:non_mono_one} and \ref{lem:non_mono_two} tell us that for the properties $\Phi=\modqa[q,\mathcal{Q}], \isof[F]$ the set of $k$ such that $\sum_{A \in \epk} (-1)^{\#A} \neq 0$ is dense. Hence Theorem~\ref{thm:main_non_mono} follows by Corollary~\ref{cor:criterion}.

\section{Conclusion and future work}\label{sec:end}
We used the framework of graph motif parameters to provide a sufficient criterion for $\#\W$-hardness of $\#\indsubs(\Phi)$. For monotone properties $\Phi$ this amounts to the reduced Euler\linebreak characteristic of the associated graph complex to be non-zero infinitely often. In particular, our results provide a fine-grained reduction from the problem of counting cliques of size~$k$ to counting induced subgraphs of size $k$ with property $\Phi$ whenever $\Phi$ is monotone and $\hat{\chi}(\Delta(\Phi_k)) \neq 0$. Using a topological approach, we established hardness for a large class of non-trivial monotone graph properties. The obvious next question, whose answer would settle the parameterized complexity of $\#\indsubs(\Phi)$ for monotone properties completely, is whether for every non-trivial monotone property $\Phi$ the set of $k$ such that $\hat{\chi}(\Delta(\Phi_k)) \neq 0$ is infinite. 

\paragraph*{Acknowledgements} We are very grateful to Cornelius Brand, Radu Curticapean, Holger Dell and Johannes Lengler for helpful advice and fruitful discussions.

\bibliographystyle{plainurl}
\bibliography{references}{}

\begin{thebibliography}{10}

\bibitem{para_ladner}
Karl~R. Abrahamson, Rodney~G. Downey, and Michael~R. Fellows.
\newblock Fixed-parameter intractability {II} (extended abstract).
\newblock In {\em {STACS} 93, 10th Annual Symposium on Theoretical Aspects of
  Computer Science, W{\"{u}}rzburg, Germany, February 25-27, 1993,
  Proceedings}, pages 374--385, 1993.

\bibitem{bredon}
Glen~E Bredon.
\newblock {\em Introduction to compact transformation groups}, volume~46.
\newblock Academic press, 1972.

\bibitem{chakrabarti_evasiveness}
Amit Chakrabarti, Subhash Khot, and Yaoyun Shi.
\newblock Evasiveness of subgraph containment and related properties.
\newblock {\em {SIAM} J. Comput.}, 31(3):866--875, 2001.

\bibitem{eth1}
Jianer Chen, Benny Chor, Mike Fellows, Xiuzhen Huang, David~W. Juedes, Iyad~A.
  Kanj, and Ge~Xia.
\newblock Tight lower bounds for certain parameterized {NP}-hard problems.
\newblock {\em Inf. Comput.}, 201(2):216--231, 2005.

\bibitem{eth2}
Jianer Chen, Xiuzhen Huang, Iyad~A. Kanj, and Ge~Xia.
\newblock Strong computational lower bounds via parameterized complexity.
\newblock {\em J. Comput. Syst. Sci.}, 72(8):1346--1367, 2006.

\bibitem{strembsdicho}
Yijia Chen, Marc Thurley, and Mark Weyer.
\newblock Understanding the complexity of induced subgraph isomorphisms.
\newblock In {\em Automata, Languages and Programming, 35th International
  Colloquium, {ICALP} 2008, Reykjavik, Iceland, July 7-11, 2008, Proceedings,
  Part {I:} Tack {A:} Algorithms, Automata, Complexity, and Games}, pages
  587--596, 2008.

\bibitem{thesis_radu}
Radu Curticapean.
\newblock {\em The simple, little and slow things count: on parameterized
  counting complexity}.
\newblock PhD thesis, Saarland University, 2015.

\bibitem{hombasis2017}
Radu Curticapean, Holger Dell, and D\'aniel Marx.
\newblock Homomorphisms are a good basis for counting small subgraphs.
\newblock In {\em Proceedings of the 49th ACM Symposium on Theory of Computing,
  STOC}, pages 210--223, 2017.

\bibitem{embsdicho}
Radu Curticapean and D{\'{a}}niel Marx.
\newblock Complexity of counting subgraphs: Only the boundedness of the
  vertex-cover number counts.
\newblock In {\em Proceedings of the 55th Annual Symposium on Foundations of
  Computer Science, FOCS}, pages 130--139, 2014.

\bibitem{homsdicho1}
V{\'\i}ctor Dalmau and Peter Jonsson.
\newblock The complexity of counting homomorphisms seen from the other side.
\newblock {\em Theoretical Computer Science}, 329(1):315--323, 2004.

\bibitem{flumgrohe_counting}
J{\"{o}}rg Flum and Martin Grohe.
\newblock The parameterized complexity of counting problems.
\newblock {\em {SIAM} J. Comput.}, 33(4):892--922, 2004.

\bibitem{flumgrohe}
Jörg Flum and Martin Grohe.
\newblock {\em Parameterized Complexity Theory (Texts in Theoretical Computer
  Science. An EATCS Series)}.
\newblock Springer-Verlag New York, Inc., Secaucus, NJ, USA, 2006.

\bibitem{homsdicho2}
Martin Grohe.
\newblock The complexity of homomorphism and constraint satisfaction problems
  seen from the other side.
\newblock {\em Journal of the ACM (JACM)}, 54(1):1, 2007.

\bibitem{indsubs_connected}
Mark Jerrum and Kitty Meeks.
\newblock The parameterised complexity of counting connected subgraphs and
  graph motifs.
\newblock {\em J. Comput. Syst. Sci.}, 81(4):702--716, 2015.

\bibitem{indsubs_density}
Mark Jerrum and Kitty Meeks.
\newblock Some hard families of parameterized counting problems.
\newblock {\em {TOCT}}, 7(3):11:1--11:18, 2015.

\bibitem{indsubs_eveness}
Mark Jerrum and Kitty Meeks.
\newblock The parameterised complexity of counting even and odd induced
  subgraphs.
\newblock {\em Combinatorica}, 37(5):965--990, 2017.

\bibitem{jonsson}
Jakob Jonsson.
\newblock {\em Simplicial complexes of graphs}, volume 1928 of {\em Lecture
  Notes in Mathematics}.
\newblock Springer-Verlag, Berlin, 2008.

\bibitem{topological_evasiveness}
Jeff Kahn, Michael~E. Saks, and Dean Sturtevant.
\newblock A topological approach to evasiveness.
\newblock In {\em 24th Annual Symposium on Foundations of Computer Science,
  Tucson, Arizona, USA, 7-9 November 1983}, pages 31--33, 1983.

\bibitem{classic_ladner}
Richard~E. Ladner.
\newblock On the structure of polynomial time reducibility.
\newblock {\em J. {ACM}}, 22(1):155--171, 1975.

\bibitem{lovaszbook}
L{\'a}szl{\'o} Lov{\'a}sz.
\newblock {\em Large networks and graph limits}, volume~60.
\newblock American Mathematical Society Providence, 2012.

\bibitem{evasiveness_lutz}
Frank~H. Lutz.
\newblock Some results related to the evasiveness conjecture.
\newblock {\em J. Comb. Theory, Ser. {B}}, 81(1):110--124, 2001.

\bibitem{indsubs_treewidth}
Kitty Meeks.
\newblock The challenges of unbounded treewidth in parameterised subgraph
  counting problems.
\newblock {\em Discrete Applied Mathematics}, 198:170--194, 2016.

\bibitem{evasiveness_survey}
Carl~A. Miller.
\newblock Evasiveness of graph properties and topological fixed-point theorems.
\newblock {\em Foundations and Trends in Theoretical Computer Science},
  7(4):337--415, 2013.

\bibitem{oliver}
Robert Oliver.
\newblock Fixed-point sets of group actions on finite acyclic complexes.
\newblock {\em Commentarii Mathematici Helvetici}, 50(1):155--177, 1975.

\bibitem{rivest}
Ronald~L. Rivest and Jean Vuillemin.
\newblock On recognizing graph properties from adjacency matrices.
\newblock {\em Theor. Comput. Sci.}, 3(3):371--384, 1976.

\bibitem{rothmobius}
Marc Roth.
\newblock Counting restricted homomorphisms via m{\"{o}}bius inversion over
  matroid lattices.
\newblock In {\em 25th Annual European Symposium on Algorithms, {ESA} 2017,
  September 4-6, 2017, Vienna, Austria}, pages 63:1--63:14, 2017.

\bibitem{smith}
PA~Smith.
\newblock Fixed-point theorems for periodic transformations.
\newblock {\em American Journal of Mathematics}, 63(1):1--8, 1941.

\bibitem{valiant}
Leslie~G. Valiant.
\newblock The complexity of computing the permanent.
\newblock {\em Theor. Comput. Sci.}, 8:189--201, 1979.

\end{thebibliography}

\end{document}